\newcolumntype{H}{>{\setbox0=\hbox\bgroup}c<{\egroup}@{}}
\theoremstyle{plain}
\newtheorem{thm}{Theorem}
\newtheorem{prop}{Proposition}
\theoremstyle{remark}         
\theoremstyle{definition} 
\newcommand{\vect}[1]{\mbox{\boldmath $ #1$}}
\newcommand{\ind}{\stackrel{\mathrm{ind}}{\sim}}
\newcommand\om{\Omega}
\newcommand\real{\mathbb{R}}
\newcommand\A{\mathcal{A}}
\newcommand\T{\mathcal{T}}
\renewcommand\S{\mathcal{S}}
\newcommand\bs{\bm{s}}
\newcommand\bx{\bm{x}}
\newcommand\by{\bm{y}}
\newcommand\bz{\bm{z}}
\newcommand\bu{\bm{u}}
\newcommand\beps{\bm{\epsilon}}
\newcommand\bphi{\bm{\phi}}
\newcommand{\btau}{\bm{\tau}}
\newcommand\brho{\bm\rho}
\newcommand\blam{\bm\lambda}
\newcommand\R{\mathcal{R}}
\newcommand\D{\mathcal{D}}
\def\independenT#1#2{\mathrel{\setbox0\hbox{$#1#2$}%
		\copy0\kern-\wd0\mkern4mu\box0}}
\def\I{{\mathbf{1}}}
\newcommand{\papertitle}{Learning Asymmetric and Local Features in Multi-Dimensional Data through Wavelets with Recursive Partitioning}
\newcommand{\citep}{\cite} 
\begin{document} 

\title{\papertitle}

\author{Meng~Li
        and~Li~Ma
\IEEEcompsocitemizethanks{\IEEEcompsocthanksitem M. Li is with the Department
of Statistics, Rice University, Houston, TX, 77025.
E-mail: meng@rice.edu. 
\IEEEcompsocthanksitem L. Ma is with the Department of Statistical Science, Duke University, Durham, NC, 27708. E-mail: li.ma@duke.edu.}
}

\IEEEtitleabstractindextext{
\begin{abstract}
Effective learning of asymmetric and local features in images and other data observed on multi-dimensional grids is a challenging objective critical for a wide range of image processing applications involving biomedical and natural images. It requires methods that are sensitive to local details while fast enough to handle massive numbers of images of ever increasing sizes. We introduce a probabilistic model-based framework
that achieves these objectives by incorporating adaptivity into discrete wavelet transforms (DWT) through Bayesian hierarchical modeling, thereby allowing wavelet bases to adapt to the geometric structure of the data while maintaining the high computational scalability of wavelet methods---linear in the sample size (e.g., the resolution of an image). We derive a recursive representation of the Bayesian posterior model which leads to an exact message passing algorithm to complete learning and inference. While our framework is applicable to a range of problems including multi-dimensional signal processing, compression, and structural learning, we illustrate its work and evaluate its performance in the context of image reconstruction using real images from the ImageNet database, two widely used benchmark datasets, and a dataset from retinal optical coherence tomography and compare its performance to state-of-the-art methods based on basis transforms and deep learning.
\end{abstract}
}

\maketitle

\IEEEdisplaynontitleabstractindextext

\IEEEpeerreviewmaketitle

\IEEEraisesectionheading{\section{Introduction}\label{sec:intro}}

\IEEEPARstart{E}{ffective} learning of asymmetric and local features in images and other data observed on multi-dimensional grids plays a critical role in a wide range of applications. One such application is optical coherence tomography (OCT). OCT is a non-invasive imaging modality widely used in ophthalmology to visualize cross-sections of tissue layers. These tissue layers---such as the inner nuclear layer and outer nuclear layer---are often mostly homogeneous horizontally while involving large vertical contrasts. These contrasts across layers are key for ophthalmologists to make a diagnosis based on the (algorithmically reconstructed) image. Furthermore, local structures in such images can indicate ocular diseases, and their proper quantitative assessment is an important reference for monitoring the progression of the disease in clinical practice~\citep{alasil2010relationship,bussel2013oct,huang2014inner,sun2014disorganization,kafieh2015thickness,oishi2018longitudinal}. Many other applications of 2D and 3D image analyses in biomedicine and beyond also involve asymmetric and local features to various extents. The effective analysis of such multi-dimensional observations can be greatly enhanced by incorporating adaptivity into the algorithm or method to take into account such features.

A further challenge in modern applications involving multi-dimensional observations is the ever increasing size of the datasets. For example, both the number of images analyzed as well as the resolution---i.e., the total number of pixels---of each image have been expanding rapidly. Many traditional methods and models become computational impractical for modern data as they scale polynomially with the resolution. Effective methods for analyzing such data must scale well with both the resolution of each image and the number of images.

The primary aim of this work is to present a general-purpose generative probabilistic model for data on multi-dimensional grids that can be used to address these challenges in inference and learning---being able to effectively adapt to the asymmetric and local nature of interesting features, while achieving a highly efficient linear computational budget. 

Our starting point is a well-known strategy for representing functions---a multi-resolution representation through the discrete wavelet transform (DWT). Wavelet analysis is hardly a new topic~\citep{Donoho1994,donoho1995adapting,Mallat2008} and it has played an important role in the context of signal processing and image analysis. Its linear computational scalability is well-suited for analyzing massive data. However, traditional statistical wavelet analyses have mostly been focusing on effective modeling and inference on the wavelet coefficients {\em given} a {\em fixed} wavelet transform of the original data~\citep{Abramovich+:98,crouse:1998,Clyde+George:00,Brown2001,Wil+Now:04,Morris2006}. A predetermined fixed wavelet transform, however, cannot adapt to the structure of the data and consequently suffers in its ability to effectively maintain the local structures in the original observation. Also, classical wavelet transforms on multi-dimensional grids are generally symmetric with respect to the dimensions, rendering them ineffective for preserving asymmetric features. No downstream statistical analyses can recover what has already been lost at the upstream wavelet transform stage. 

In this work, we show that it is possible to incorporate the desired adaptivity into the wavelet transform stage while maintaining the computational scalability of the statistical analysis through a very simple hierarchical modeling strategy---starting the model ``one level up'', that is, by incorporating the wavelet transform itself as an unknown quantity of interest into the probabilistic model, and learn it based on the data. Specifically, 
we consider latent (1D) wavelet transforms that can ``twist and turn'' (or ``warp'') over the multi-dimensional grid, or the {\em index space}, and adopt a Bayesian prior on the path of its twisting and turning. In other words, we place a prior on the local directionality of the 1D transform to allow the ``warping'' to adapt to the geometric structure of the underlying function, e.g., the true image, through the Bayesian machinery.

In designing an appropriate prior for the local directionality, we note that ``warping'' a 1D wavelet transform through the grid points is equivalent to fixing the 1D wavelet transform while shuffling grid points in the multivariate index space of the observation---i.e., through applying a given 1D wavelet transform to a permuted version of the observation. This connection implies that probabilistic models on ``warping'' can be induced from distributions on the space of permutations of the index points or locations. Moreover, we draw a further connection between permutations and recursive dyadic partitioning on the index space to construct a prior on permutations induced by random recursive partitioning over the index space. This prior takes advantage of the fact that multi-dimensional images tend to be piecewise smooth to strike a balance between flexibility and computational tractability, allowing us to complete exact Bayesian inference through a recursive message passing algorithm with a computational budget linear in the resolution and sample size. 

Due to the connection to recursive partitioning, we shall refer to our approach as WARP, or {\em WAvelets with Recursive Partitioning}. Through extensive numerical studies involving a large number of natural images from the ImageNet database, two additional benchmark data sets, and an OCT data set, we show that WARP often outperforms the existing state-of-the-art approaches by a substantial margin while maintaining the computational efficiency of classical wavelet analyses with fixed wavelet transforms. While we focus on 2D and 3D images in our motivation and numerical examples, our framework is readily applicable to observations of more than three dimensions without modification.

The rest of the paper is organized as follows. \ref{section:method} introduces the WARP framework. In Section~\ref{section:RDP} we review the key components of Bayesian wavelet regression models, introduce permutation of the index space as a way to incorporate adaptivity into wavelet analysis, and construct a class of priors on permutations induced by recursive dyadic partitioning on the index space. We derive the corresponding posterior model and provide computational recipes for exact Bayesian inference under the WARP model with Haar wavelets in Section~\ref{sec:posterior.inference}. In~\ref{sec:experiments}, we carry out an extensive numerical study and compare our method to existing state-of-the-art wavelet and non-wavelet methods including a deep learning method on a variety of real images. In \ref{sec:application} we carry out a case study by applying WARP to analyze an OCT data set, and compares its performance to a number of state-of-the-art approaches. \ref{sec:dicussion} concludes with some brief remarks. 
The C++ source code along with a Matlab toolbox and R package to implement the proposed method is available online at \url{https://github.com/MaStatLab/WARP}. 

\section{Method\label{section:method}} 
\subsection{A Bayesian hierarchical wavelet regression model with recursive dyadic partitions\label{section:RDP}}
\subsubsection{Background and overview}
We use $\om$ to denote a space of indices or locations (e.g., pixels in images) where we obtain numerical measurements (e.g., intensities of pixels). Throughout this work, we assume $\om$ to be an $m$-dimensional rectangular tube consisting of $n_i=2^{J_i}$ grid points in the $i$th dimension for $i=1,2,\ldots,m$, that is, the function values are observed on a multi-dimensional equidistant grid. To simplify notation, we shall use $[a,b]$ to represent the set $\{a,a+1,\ldots,b\}$ for two integers $a$ and $b$ with $a\leq b$. Then the index space $\om$ is of the form
\[
\om=[0,2^{J_1} - 1]\times [0,2^{J_2} - 1]\times \cdots \times [0,2^{J_m} - 1].
\]
The locations in $\om$ can be placed into a vector of length $n=2^J$. 
For example, we can map the location $\bs=(s_1,s_2,\ldots,s_m) \in \om$ to the $t$th element in the vector, where $t = s_1 + \sum_{l = 2}^m  (\prod_{i = 1}^{l -1} n_i ) s_{l}$. Correspondingly, any function $f:\om \rightarrow \real$ can be represented as a vector ${\bm f}$ of length $n=2^{J}$ whose $t$th element is $f(\bs)$.

Now, we consider a regression model
\begin{equation}\label{eq:model}
\by = \bm{f} + \beps \quad \text{with } \beps \sim N(\bm{0}, \Sigma_{\epsilon}),
\end{equation}
where $\by=(y_0,y_1,\ldots,y_{2^{J}-1})'$ are the observations on $\om$, $\bm{f}=(f_0,f_1,\ldots,f_{2^{J}-1})'$ the underlying unknown function mean (or the signal), and $\beps=(\epsilon_0,\epsilon_1,\ldots,\epsilon_{2^{J}-1})'$ the noise. For ease of illustration, we assume homogeneous white noise, i.e., $\Sigma_{\epsilon} = \sigma^2 I_n$, though our model and inference algorithms do not rely on this assumption at all and can be readily apply to models with heterogeneous variance; see~\ref{sec:dicussion} for further discussion.

One can apply a 1D discrete wavelet transform (DWT) to the observation vector $\by$ through multiplying the corresponding orthonormal matrix $W$ to both sides of Eq.~\eqref{eq:model}, obtaining $\bm{w} = \bz + \bu$ where $\bm{w} = W \by$ is the vector of empirical wavelet coefficients, $\bz =W \bm{f}$ the mean vector for wavelet coefficients and $\bu =W\beps$ the noise vector in the wavelet domain. This model can be rewritten in a location-scale form: $w_{j,k} = z_{j,k}+ u_{j,k}$ for $j=0,1,\ldots,J-1$ and $k=0,1,\ldots,2^{j}-1$,
where $w_{j,k}$, $z_{j,k}$, $u_{j,k}$ are the $k$th wavelet coefficient, signal, and noise at the $j$th scale in the wavelet (i.e., location-scale) domain, respectively. 

It will generally be unreasonable to treat multi-dimensional observations simply as a vector with an arbitrary ordering of the locations; 
see~\cite{Donoho:99,Jac+:11,Ali+:13}. Such a vectorization
ignores the structure of the underlying function, and thus will result in less effective ``energy concentration'', i.e., producing 
a wavelet decomposition of ${\bm f}$ that is not very sparse---with many non-zero $z_{j,k}$'s of small to moderate sizes, reducing the signal-to-noise ratio at those $(j,k)$ combinations. 

For each specific data set at hand, however, there typically exists some orderings of the locations that effectively reorganize the data so that the corresponding vectorization of the data provides an efficient
representation of the underlying function; see~\ref{fig:RDP} for an illustration. Adopting a Bayesian modeling perspective, one can think of the underlying ``good'' vectorizations as latent structures of interest. Also, one can view the wavelet regression model under each given index permutation as a competing generative model for the observed data given the latent structure. 
This perspective 
inspires us to incorporate a prior on the permutations, thereby allowing us to compute a posterior on the space of competing wavelet regression models, and then based on the goal of the analysis proceed with the common devices for Bayesian inference. Two particular useful tools are (i) Bayesian model selection~\citep{raftery1995bayesian}---learning a good permutation for representing the image based on its posterior probability; and (ii) Bayesian model averaging---estimating the underlying function based on averaging over the different permutations using their posterior probabilities~\citep{Volinsky1999}.

This Bayesian approach does incur a practical challenge commonly arising in high-dimensional problems---the space of all permutations is so massive that brute-force enumeration over the space is computationally prohibitive. In the current context, effective exploration of the model (i.e., permutation) space becomes possible, however, once we realize that the vast majority of the permutations will lead to wavelet regression models that ignore the spatial smoothness of the underlying function---i.e., close locations in $\om$ often correspond to similar values in ${\bm f}$. In particular, we can focus attention on a subclass of permutations that to various extents preserve spatial smoothness, and design a model space prior supported on this  manageable subclass. To this end, we appeal to a relationship between recursive dyadic partitioning (RDP) and permutations, and shall consider the collection of permutations induced by RDPs on $\om$. 

Next we introduce some basic notions regarding RDPs on $\om$, which are then used to construct a prior on permutations. In reading the next two subsections, the reader may refer to \ref{fig:RDP} for an illustration of the key notions and notations.

\subsubsection{Recursive dyadic partitioning on the location space} 

A {\em partition} of $\om$ is a collection of nonempty sets $\{A_1,A_2,\ldots,A_H\}$ such that $\om=\cup_{h=1}^{H} A_h$ and $A_{h_1}\cap A_{h_2}=\emptyset$ for any $h_1\neq h_2$. Now let $\T^{0},\T^1,\T^{2},\ldots,\T^{j},\ldots$ be a {\em sequence of partitions} of $\om$. We say that this sequence is a {\em recursive dyadic partition} (RDP) if it satisfies the following two conditions: (i) $\T^{j}$ consists of $2^{j}$ blocks: $\T^{j}=\{A_{j,k}:k=0,1,\ldots,2^{j}-1\}$; (ii) $\T^{j+1}$ is obtained by dividing each set in $\T^{j}$ into two pieces, i.e., $A_{j,k} = A_{j+1,2k}\cup A_{j+1,2k+1}$ for all $j\geq 0$ and $k=0,1,\ldots,2^{j}-1$.

We call an RDP {\em canonical} if the sequence of partitions satisfy two additional conditions: (iii) if the partition blocks $A_{j,k}$ are rectangles of the form
\[
A_{j,k} = [a_{j,k}^{(1)},b_{j,k}^{(1)}] \times [a_{j,k}^{(2)},b_{j,k}^{(2)}] \times \cdots \times [a_{j,k}^{(m)},b_{j,k}^{(m)}].
\]
and (iv) $A_{j+1,2k}$ and $A_{j+1,2k+1}$ are produced by dividing $A_{j,k}$ into two halves at the middle of one of $A_{j,k}$'s \textit{divisible} dimensions.

A rectangular partition block $A_{j,k}$ is {\em divisible} in dimension $d$ if $A_{j,k}$ is supported on at least two values in that dimension, i.e., $a_{j,k}^{(d)} < b_{j,k}^{(d)}$. In this case, if $A_{j,k}$ is divided in dimension~$d$, then its children $A_{j+1,2k}$ and $A_{j+1,2k+1}$ are given by
\begin{equation}
[a_{j+1,2k}^{(d)},b_{j+1,2k}^{(d)}] = [a_{j,k}^{(d)},(a_{j,k}^{(d)}+b_{j,k}^{(d)})/2] 
\end{equation} and 
\begin{equation}
[a_{j+1,2k+1}^{(d)},b_{j+1,2k+1}^{(d)}] = [(a_{j,k}^{(d)}+b_{j,k}^{(d)})/2+1,b_{j,k}^{(d)}], 
\end{equation}
while
\[
[a_{j+1,2k}^{(d')},b_{j+1,2k}^{(d')}] = [a_{j+1,2k+1}^{(d')},b_{j+1,2k+1}^{(d')}] = [a_{j,k}^{(d')},b_{j,k}^{(d')}] 
\]
for all $d'\neq d$.

Any canonical RDP on $\om$ will have exactly $J+1$ levels, i.e., $\T^{0},\T^{1},\ldots,\T^{J}$. The $j$th level partition $\T^{j}$ consists of $2^{j}$ rectangular pieces of equal size, each covering $n/2^{j}$ locations in $\om$. From now on, we simply use RDP to refer to canonical ones when this causes no confusion.

\subsubsection{RDPs and permutations}
Each RDP can be represented by a $J$ level bifurcating tree with the partition blocks in $\T^{j}$ forming the $2^j$ nodes in the $j$th level of the tree. As such, we can use $\T=\cup_{j=0}^{J} \T^{j}$ to represent the RDP. Each node in the $J$th level corresponds to a unique location in $\om$, and is called ``atomic" as it contains a single element. We shall interchangeably refer to an RDP as a ``tree'', and to the partition blocks as ``nodes''.

Given the RDP $\T$, each location $\bs \in \om$ falls into a unique branch of $\T$, that is, $\om=A_{0,0}\supset A_{1,k_1(\bs)}\supset A_{2,k_2(\bs)}\supset \cdots \supset A_{J,k_J(\bs)}=\{\bs\}$, with $A_{j,k_j(\bs)}$ being the node in the $j$th level to which $\bs$ belongs. Accordingly, the RDP $\T$ induces a unique vectorization of the locations in $\om$ such that $\bs$ corresponds to the $t(\bs)$th element of the vector where $t(\bs)=\sum_{l = 1}^J  2^{J-l}\cdot  e_l(\bs)$ with $e_l = k_l(\bs)\!\! \mod 2$, indicating the branch of the tree $\bs$ falls into at level $l$. As such, $\T$ induces a permutation of the $n$ locations, and we let $\pi_{\T}$ denote this permutation. 

As an illustration, \ref{fig:RDP} presents an RDP and the induced permutation using a toy $4 \times 4$ image (so $m = 2$ and $J_1 = J_2 = 2$). We index pixels in the true image from 0 to 15. In addition, we assume that the underlying function takes only two values---1 and 2---on the 16 locations, represented by the white and the red colors, respectively. The demonstrated RDP corresponds well to the structure of the underlying signal, which would result in an effective 1D wavelet analysis on the vectorized observation.

\def\pixels{
	{2,2,0,2},
	{2,2,0,2},
	{0,0,0,2},
	{2,2,0,2}%
}

\def\pixelsvec{
	{2,2,2,2,0,0,2,2,0,0,0,0,2,2,2,2}%
}

\definecolor{pixel 0}{HTML}{FFFFFF}
\definecolor{pixel 2}{HTML}{FF0000}

\tikzstyle{level 1}=[level distance=3cm, sibling distance=4cm]
\tikzstyle{level 2}=[level distance=3.5cm, sibling distance=2cm]
\tikzstyle{level 3}=[level distance=3.5cm, sibling distance=1cm]
\tikzstyle{level 4}=[level distance=3.5cm, sibling distance=0.5cm]
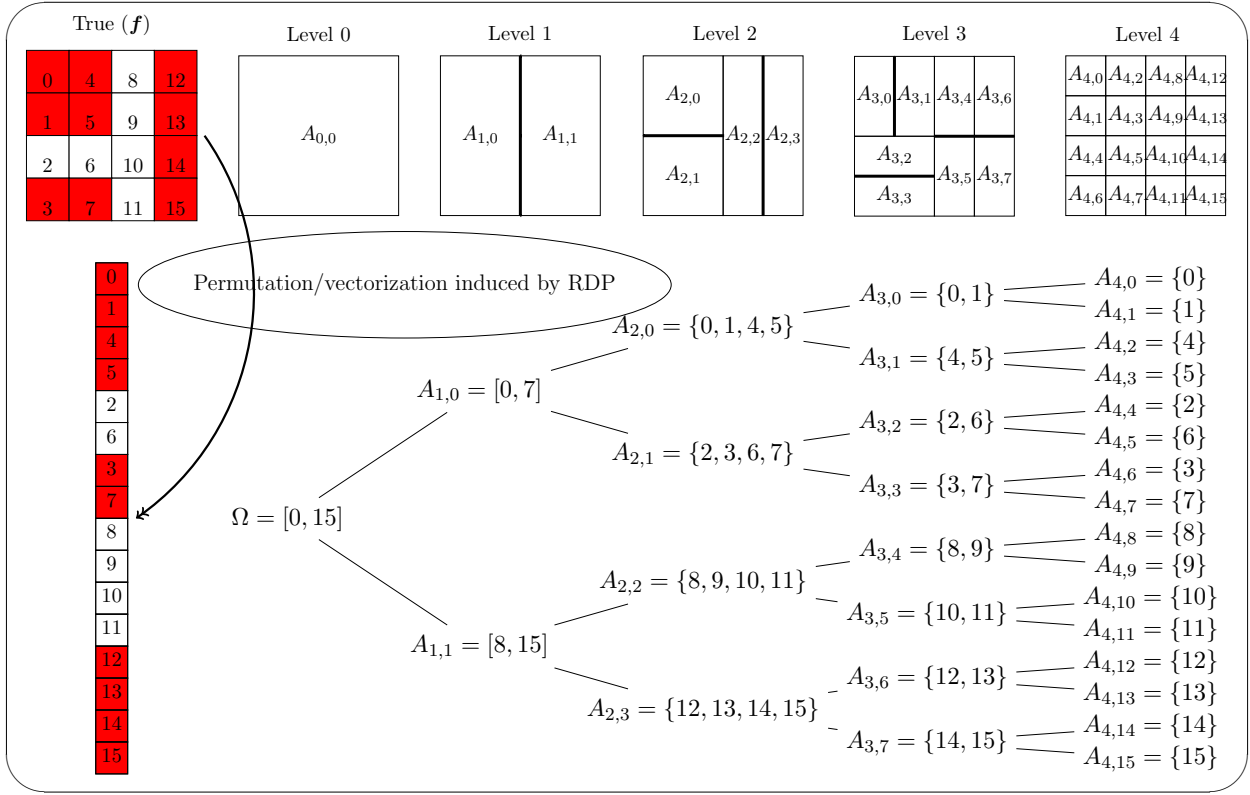
\begin{figure*}[ht!]
	\centering
	\begin{adjustbox}{width=0.8\textwidth,height=0.9\textheight,keepaspectratio}
		\ovalbox{
			\begin{tikzpicture}
			\begin{scope}[start chain=1 going right, node distance=5mm]
			\node[on chain=1, label={True ($\vect{f}$)}] (Obs) {
				\begin{tikzpicture}[scale = 0.8]
				\foreach \line [count=\y] in \pixels {
					\foreach \pix [count=\x] in \line {
						\pgfmathsetmacro{\z}{int((\x -1)*4 + \y-1)};
						\draw[fill=pixel \pix] (\x,-\y) rectangle +(1,1) node at +(0.5, 0.3) {\z};
				}}\end{tikzpicture}};
			\node [on chain=1, label={Level 0}] (A) {\begin{tikzpicture}[anchor=center, scale=1.5]
				\draw (0, -2) rectangle (2,0) node[pos=.5] {$A_{0, 0}$}; 
				\end{tikzpicture}};
			\node[on chain=1, label={Level 1}] (L1) {
				\begin{tikzpicture}[anchor=center, scale=1.5]
				\draw (0,-2) rectangle (1,0) node[pos=.5] {$A_{1, 0}$}; 
				\draw (1,-2) rectangle (2,0) node[pos=.5] {$A_{1, 1}$};
				\draw[black, ultra thick] (1,-2) grid (1,0);
				\end{tikzpicture}
			};
			\node[on chain=1, label={Level 2}] (L2) {\begin{tikzpicture}[anchor=center, scale=1.5]
				\draw (0,-1) rectangle (1,0) node[pos=0.5] {$A_{2,0}$};
				\draw (0, -2) rectangle (1,-1) node[pos=0.5] {$A_{2,1}$};
				\draw (1, -2) rectangle(1.5, 0) node[pos=.5]{$A_{2, 2}$}; 
				\draw (1.5, -2) rectangle(2, 0) node[pos=.5]{$A_{2, 3}$}; 
				\draw[black, ultra thick] (0,-1) grid (1,-1);
				\draw[black, ultra thick] (1.5,-2) -- (1.5,0);
				\end{tikzpicture}};
			\node[on chain=1, label={Level 3}] (L3) {\begin{tikzpicture}[anchor=center, scale=1.5]
				\draw (0,-1) rectangle (0.5, 0) node[pos=0.5] {$A_{3,0}$};
				\draw (0.5,-1) rectangle (1, 0) node[pos=0.5] {$A_{3,1}$}; 
				\draw (0, -1.5) rectangle (1, -1) node[pos=0.5] {$A_{3,2}$}; 
				\draw (0, -2) rectangle (1, -1.5) node[pos=0.5] {$A_{3,3}$};
				\draw (1, -2) rectangle (1.5, -1) node[pos=0.5] {$A_{3,5}$}; 
				\draw (1, -1) rectangle +(0.5, 1) node[pos=0.5] {$A_{3,4}$}; 
				\draw (1.5, -2) rectangle +(0.5, 1) node[pos=0.5] {$A_{3,7}$};
				\draw (1.5, -1) rectangle +(0.5, 1) node[pos=0.5] {$A_{3,6}$};
				\draw[black, ultra thick] (0.5,-1) -- (0.5,0); 
				\draw[black, ultra thick] (0, -1.5) -- (1, -1.5);
				\draw[black, ultra thick] (1, -1) -- (1.5, -1); 
				\draw[black, ultra thick] (1.5, -1) -- (2, -1); 
				\end{tikzpicture}};
			\node[on chain=1, label={Level 4}](L4){
				\begin{tikzpicture}[anchor=center,scale = 0.75]
				\foreach \i [count=\j from 0] in {0,1,4,5,2,6,3,7,8,9,10,11,12,13,14,15}{
					\pgfmathsetmacro{\x}{int(mod(\i, 4)};
					\pgfmathsetmacro{\y}{int((\i - \x) / 4)};
					\draw (\y,-\x) rectangle +(1,-1) node[pos=.5] {$A_{4, \j}$};
				}			
				\end{tikzpicture}}; 
		\end{scope}
		\begin{scope}[start branch=B2]
			\node[below=0.5cm of Obs] (C) {
				\begin{tikzpicture}[scale=0.6]
				\foreach \line [count=\y] in \pixelsvec {
					\foreach \pix [count=\x] in \line {
						\pgfmathsetmacro{\z}{int((\x -1))};
						\draw[fill=pixel \pix] (\y,-\x) rectangle +(1,-1); 
					}
				};			
				\foreach \i [count=\j] in {0,1,4,5,2,6,3,7,8,9,10,11,12,13,14,15}
				\draw (1,-\j) rectangle +(1,-1) node at +(0.5, 0.01) {\i};
				\end{tikzpicture}
			}; 
			\node[below=0.5cm of L2]  (B) {
				\begin{tikzpicture}[anchor=center, grow=right, yscale=-1, draw=none, scale=1.2, every node/.style={scale=1.2}]
				\node[below=of A] (z){$\Omega=[0,15]$}
				child {
					node {$A_{1, 0}=[0,7]$}
					child {node [draw=none] {$A_{2,0}=\{0,1,4,5\}$ }
						child {node {$A_{3,0}=\{0,1\}$}
							child {node {$A_{4, 0}=\{0\}$}} child {node {$A_{4, 1}=\{1\}$}}
						} 
						child {node {$A_{3,1}=\{4,5\}$}
							child {node {$A_{4, 2}=\{4\}$}} child {node {$A_{4, 3}=\{5\}$}}
					}}
					child {node [yshift=0pt] {$A_{2,1}=\{2,3,6,7\}$}
						child {node {$A_{3,2}=\{2,6\}$}
							child {node {$A_{4, 4}=\{2\}$}}
							child {node {$A_{4, 5}=\{6\}$}}
						} 
						child {node {$A_{3,3}=\{3,7\}$}
							child {node {$A_{4, 6}=\{3\}$}}
							child {node {$A_{4, 7}=\{7\}$}}}
					}
				}
				child {node {$A_{1, 1}=[8, 15]$}
					child {node {$A_{2,2}=\{8, 9, 10, 11\}$ }
						child {node {$A_{3,4}=\{8, 9\}$}
							child {node {$A_{4, 8}=\{8\}$}} child {node {$A_{4, 9}=\{9\}$}}
						} 
						child {node {$A_{3,5}=\{10,11\}$}
							child {node {$A_{4, 10}=\{10\}$}} child {node {$A_{4, 11}=\{11\}$}}
					}}
					child {node {$A_{2,3}=\{12,13,14,15\}$}
						child {node {$A_{3,6}=\{12,13\}$}
							child {node {$A_{4, 12}=\{12\}$}}
							child {node {$A_{4, 13}=\{13\}$}}
						} 
						child {node {$A_{3,7}=\{14,15\}$}
							child {node {$A_{4, 14}=\{14\}$}}
							child {node {$A_{4, 15}=\{15\}$}}}
					}
				}; 
				\end{tikzpicture}}; 
		\end{scope}	
		\begin{scope}
			\draw[->>,  very thick] (Obs.east) [out=-55, in=35] to (C.east); 
			\draw (5.5, -2.8) ellipse (5cm and 1cm) node {\large Permutation/vectorization induced by RDP};
		\end{scope}
\end{tikzpicture}}
\end{adjustbox}
\caption{Illustration of the correspondence between RDPs and permutations. In the tree representation, $A_{2, 0} = \{0, 1, 4, 5\}$ means the node $A_{2, 0}$ contains the (0, 1, 4, 5)th elements of $\Omega$. The coloring code for the observations is red for 2 and white for 1. From level 0 to level 3, edges that are thicker than others are the partitions of the current level; nodes at the last level are all atomic.}
\label{fig:RDP}
\end{figure*}

We shall now utilize the relationship between RDPs and permutations to construct a prior on the latter. Before that, we shall simplify our notations a little. Note that while what the $(j, k)$th node $A_{j, k}$ is depends on the RDP $\T$, different RDPs can share common nodes---the $(j,k)$th node in one $\T$ may be the same as the $(j,k')$th node in another. (Note that the level of the node must be the same in either RDP.) In the following, we will need to specify quantities that only depend on the node regardless of the RDP tree $\T$ it arises from. A succinct way for expressing such quantities is to write them as a mapping from $\A$ to $\real$, where $\A$ denotes the collection of all sets that {\em could} be nodes in {\em some} RDP, or equivalently, $\A$ is the totality of nodes in all RDPs.
(This is to be distinguished from the collection of nodes in any particular RDP, which is denoted by $\T$.) It is worth noting $\A$ is a finite set.

Now we may define $\rho_{j,k}$ in such a way that its value only depends on what the set $A_{j,k}$ is, regardless of the RDP $\T$ to which it belongs. In this case we can let $\rho_{j,k} = \rho(A_{j, k})$, where $\rho(\cdot)$ is a mapping form $\A$ to $[0,1]$. While a set $A\in \A$ might be the $A_{j,k}$ in one RDP and $A_{j,k'}$ in another, the corresponding $\rho(A)$ value will then be the same under this mapping based specification.
The mapping-based notation such as $\rho(\cdot)$ allows various parameters to be specified in a node-specific (rather than RDP-specific) manner. This observation has extremely important computational implications---as we will show later, 
the space of nodes $\A$ for all canonical RDPs is of a cardinality linear in the size of $\om$, while that of canonical RDPs is exponential in $n$. (See Proposition~\ref{lem:complexity.tree} in the Supplementary Materials.) Therefore it is exactly the ability to carrying out the computation for the posterior in a node-specific manner that allows us to achieve linear complexity in our inference algorithm. Moreover, this notation will also help elucidate derivations on the posterior.

\subsubsection{Priors on RDPs: random RDP\label{section:priors.RDP}}
Our strategy of representing multi-dimensional functions using vectors will only pay off if the vectorization of $\om$ can result in an efficient characterization of the data, thereby leading to stronger energy concentration under wavelet transforms. For example, the RDP illustrated in \ref{fig:RDP} will lead to particularly efficient inference of the corresponding function. In general, the true optimal vectorization---or the corresponding RDP---is unknown, and one shall rely on the data to learn the RDPs that induce ``good'' vectorizations. 

We aim to achieve this in a hierarchical Bayesian approach by treating the RDP as a latent structure and placing a prior on the RDP. We consider the following prior on the RDP originally proposed in the context of density estimation~\citep{wongandma:2010,ma:2013}, which is specified in a node-specific fashion and leads to very efficient node-based posterior inference algorithms that scale linearly in $n$, the size of $\om$.

We describe the prior as a simple generative procedure for an RDP in an inductive manner. First, $\T^{0}=\{\om\}$ by definition. Now suppose we have generated $\T^{0},\T^{1},\cdots,\T^{j}$ for some $0\leq j \leq J-1$, then $\T^{j+1}$ is generated as follows. For each $A_{j,k}\in\T^{j}$, let $\D(A_{j,k})\subset \{1,2,\ldots,m\}$ be the collection of its divisible dimensions. We randomly draw a dimension in $\D(A_{j,k})$, and divide $A_{j,k}$ in that dimension to get $A_{j+1,2k}$ and $A_{j+1,2k+1}$. In particular, we let $\lambda_{d}(A_{j,k})$ be the probability for drawing the $d$th dimension,
where $\sum_{d=1}^{m} \lambda_{d}(A_{j,k}) = 1$ and $\lambda_{d}(A_{j,k})=0$ for $d\not\in \D(A_{j,k})$. In many problems, {\em a priori} one has no reason to favor dividing any particular dimension over another, and a default specification is to set
\[ \lambda_{d}(A_{j,k})=1/|\D(A_{j,k})|\cdot \I_{\{d\in\D(A_{j,k})\}},\]
where $\I_{E}$ is the indicator function of whether $E$ holds or not. This completes the inductive generation of $\T^{j+1}$. The procedure will terminate after $\T^{J}$ is generated as all nodes in $\T^{J}$ are atomic with no divisible dimensions.

The above generative mechanism forms a probability distribution on the space of RDPs, which is called the {\em random recursive dyadic partition} (RRDP) distribution, and it is specified by the collection of selection probabilities $\lambda_{d}(\cdot)$ defined on all {\em potential} nodes. We write
\begin{equation}
\T \sim \text{RRDP}(\blam),
\end{equation}where
$\{ \blam(A): A\in\A\},$ and  $\blam(A)=(\lambda_1(A),\lambda_2(A),\ldots,\lambda_{m}(A))'$, that is, $\blam$ is a mapping from $\A$ to the $(m-1)$-dimensional simplex. 

It is worth noting that the RRDP is a restricted version of the more general Bayesian classification and regression tree (CART) prior \citep{Chipman1998a,Denison1998}. The main constraint in RRDP compared to the general Bayesian CART is that the former is supported on canonical RDPs only---that is, each dyadic partition must be an even split, occurring at the middle of the range in one of the divisible dimensions. This additional restriction ensures that the cardinality of $\A$ is linear in $n$, thereby 
reducing the computational complexity required for inference to $O(n)$.

\subsection{Recipes for Bayesian inference\label{sec:posterior.inference}}

In this section, we present recipes for deriving and sampling from the posterior of our Bayesian model, and for evaluating posterior summaries such as the posterior mean of ${\bm f}$. We note that the marginal posterior of the RDP $\T$ is the key component for posterior inference, because once conditional on $\T$, our model reduces to a standard Bayesian wavelet regression for which closed-form conditional posteriors are readily available under common prior specifications.

Interestingly, when a Haar basis is adopted in the wavelet regression model, the marginal posterior of $\T$ can be calculated analytically in closed form through a recursive algorithm that is operationally similar to Mallat's pyramid algorithm, achieving a linear computational complexity $O(n)$.

\subsubsection{Exact Bayesian inference under Haar basis\label{section:estimation}}
The Haar wavelet basis is unique in that 
the $(j,k)$th wavelet coefficient under the vectorization induced by any RDP $\T$ 
is determined by only the locations inside the node $A_{j,k}$. We call this property of the Haar basis {\em node-autonomy} and say that inference under the Haar basis is {\em node-autonomous}. Specifically, for all RDPs in which $A$ is a node and is divided in the $d$th direction, the corresponding Haar wavelet coefficient associated with the node $A$ is given by 
\[
w_{d}(A) = 1/\sqrt{|A|} \cdot \left(\sum_{\bx\in A^{(d)}_{l}}y(\bx)-\sum_{\bx\in A^{(d)}_{r}}y(\bx)\right)
\] 
where $A^{(d)}_{l}$ and $A^{(d)}_{r}$ represent the two children nodes if $A$ is divided in the $d$th dimension and $|A|=2^{J-j}$ is the total number of locations in $A$. In contrast, wavelet coefficients from wavelet bases with longer support than Haar are not node-autonomous---not only does the coefficient associated with $A$ depend on the observations within $A$ but on those in other (often but not always adjacent) nodes in $\T$ as well. 

Node-autonomy enables the posterior to be computed in a node-specific fashion, 
avoiding integration in the much larger space of RDPs. Consequently, exact inference can be completed in a computational complexity of the same scale as the total number of all potential nodes in RDPs, which is equal to $\prod_{i = 1}^m (2 n_i - 1) = O(2^m n)$. 

Next we lay out the general strategy for inference. 
We show through two theorems that the marginal posterior of the RDP $\T$  is computable in analytically through a recursive algorithm that resembles Mallat's pyramid algorithm for two very popular classes of Bayes wavelet regression models---(i) those that model each wavelet coefficient independently given $\T$ (Theorem~\ref{thm:independent_shrinkage}); and (ii) those that induce a hidden Markov model (HMM) for incorporating dependency among the wavelet coefficients given $\T$ (Theorem~\ref{thm:latent_state}). 

\begin{thm}
	\label{thm:independent_shrinkage}
	Suppose $\T \sim {\rm RRDP}(\blam)$ and given the Haar DWT under $\T$, one models the wavelet coefficients independently, i.e., $(w_{j,k},z_{j,k}) \ind p_{j,k}(w,z\,|\,\bphi)$ for all $(j,k)$, where $\bphi$ represents the hyperparameters of the Bayesian wavelet regression model.
	Then the marginal posterior of $\T$ is still an RRDP. Specifically,
	$\T\,|\,\by \sim {\rm RRDP}(\tilde{\blam})$
	where the posterior selection probability mapping $\tilde{\blam}$ is given as
	\[
	\tilde{\lambda}_{d}(A) =\lambda_{d}(A) M_{d}(A) \Phi(A^{(d)}_{l})\Phi(A^{(d)}_{r})/\Phi(A)
	\]
	for any non-atomic $A\in \A$ where $M_{d}(A)$ is the marginal likelihood contribution from the wavelet coefficient on node $A$ if it is a node in $\T$ and divided in dimension $d$, i.e.,
	$M_{d}(A) = \int p_{j,k}(w_{d}(A),z\,|\,\bphi) \, dz$
	and $\Phi:\A \rightarrow [0,\infty)$ is a mapping defined recursively (i.e., its value on $A$ depends on its values on $A$'s children) as
	\begin{equation}
	\Phi(A) = \sum_{d\in \D(A)} \lambda_{d}(A) M_{d}(A) \Phi(A^{(d)}_{l})\Phi(A^{(d)}_{r}) 
	\end{equation}
	if $A$ is not atomic, and $\Phi(A) = 1$ if $A$ is atomic. 
\end{thm}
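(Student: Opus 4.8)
The plan is to show that the posterior mass of $\T$, viewed as a function on the (finite) space of canonical RDPs, factorizes over the non-atomic nodes of $\T$ in exactly the multiplicative form that characterizes an RRDP law, and then to read off the updated selection probabilities and verify they form a valid mapping into the simplex.

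First I would record the prior mass of a given RDP. By construction of $\text{RRDP}(\blam)$, if for a tree $\T$ we write $d_A$ for the dimension in which a non-atomic node $A\in\T$ is split, then
\[
p(\T)\;=\;\prod_{A\in\T\,:\,A\text{ non-atomic}}\lambda_{d_A}(A),
\]
since the generative procedure makes independent dimension choices across nodes and every non-atomic node of $\T$ is split exactly once. Next I would compute the marginal likelihood $p(\by\,|\,\T)$. Because the Haar DWT matrix $W_{\T}$ under $\T$ is orthonormal, the change of variables $\bm w=W_{\T}\by$ has unit Jacobian and the model $\by=\bm f+\beps$ becomes the location--scale model on the coefficients; the single scaling (father) coefficient equals $n^{-1/2}\sum_{\bx\in\om}y(\bx)$ and is identical for every RDP, contributing a constant $c$ that does not depend on $\T$. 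By node-autonomy, the detail coefficients of the Haar DWT under $\T$ are in bijection with the non-atomic nodes of $\T$, the coefficient attached to $A$ being $w_{d_A}(A)$, which depends only on the data inside $A$. Integrating out the $z$'s under the independent per-coefficient models $p_{j,k}(\cdot\,|\,\bphi)$ then yields
\[
p(\by\,|\,\T)\;=\;c\prod_{A\in\T\,:\,A\text{ non-atomic}}M_{d_A}(A),
\]
so that $p(\T\,|\,\by)\;\propto\;\prod_{A\in\T\,:\,A\text{ non-atomic}}\lambda_{d_A}(A)\,M_{d_A}(A)$.

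It then remains to match this with the RRDP$(\tilde{\blam})$ mass of $\T$. Substituting $\tilde\lambda_d(A)=\lambda_d(A)M_d(A)\Phi(A^{(d)}_{l})\Phi(A^{(d)}_{r})/\Phi(A)$ into $\prod_{A}\tilde\lambda_{d_A}(A)$ and separating the $\Phi$-factors, the product $\prod_{A\in\T\,:\,A\text{ non-atomic}}\Phi(A^{(d_A)}_{l})\Phi(A^{(d_A)}_{r})/\Phi(A)$ telescopes along the tree: every node of $\T$ other than the root $\om$ occurs exactly once in a numerator, every non-atomic node occurs exactly once in a denominator, and $\Phi\equiv 1$ on the atomic (leaf) nodes, so the product collapses to $1/\Phi(\om)$. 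Hence the RRDP$(\tilde{\blam})$ mass of $\T$ equals $\Phi(\om)^{-1}\prod_{A\in\T\,:\,A\text{ non-atomic}}\lambda_{d_A}(A)M_{d_A}(A)$, which is proportional to $p(\T\,|\,\by)$ with a constant free of $\T$. Finally I would check that $\tilde{\blam}$ is a bona fide selection-probability mapping: $\tilde\lambda_d(A)=0$ for $d\notin\D(A)$ is immediate, and $\sum_{d\in\D(A)}\tilde\lambda_d(A)=1$ is precisely the recursion $\Phi(A)=\sum_{d\in\D(A)}\lambda_d(A)M_d(A)\Phi(A^{(d)}_{l})\Phi(A^{(d)}_{r})$. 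Thus RRDP$(\tilde{\blam})$ is a genuine probability distribution on RDPs; being proportional to the posterior, it must equal it.

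The routine part is the prior and likelihood factorizations; the step needing the most care is the telescoping identity for the $\Phi$-factors, i.e., the bookkeeping of which nodes of $\T$ appear as numerators versus denominators (together with the convention $\Phi\equiv 1$ on leaves), since this is what turns a node-recursively defined object into the correct global normalizing constant $\Phi(\om)$. I would also flag one modeling subtlety: for $M_d(A)$ to be well defined as a function of the node $A$ alone, the per-coefficient model $p_{j,k}$ should depend on $(j,k)$ only through $A$ (e.g., through the scale $j=\log_2(n/|A|)$), since the same set $A$ can sit at different positions $k$ in different RDPs; this is implicit in the mapping-based notation and should be stated.
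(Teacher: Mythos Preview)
Your proof is correct and takes a genuinely different route from the paper's. The paper does not prove Theorem~\ref{thm:independent_shrinkage} directly: it simply observes that it is the single-state special case of Theorem~\ref{thm:latent_state}, and proves the latter. That proof proceeds by first establishing, via bottom-up induction on the level of $A$, that $\Phi_s(A)$ (here $\Phi(A)$) has the semantic meaning ``marginal likelihood of the data in $A$ given $A\in\T$'', and then reads off the posterior transition/selection probabilities by a direct Bayes-rule computation at each node. Your argument is more algebraic and self-contained: you write the prior and the marginal likelihood as products over non-atomic nodes, and then verify by a telescoping identity that the RRDP$(\tilde{\blam})$ mass matches this product up to the constant $\Phi(\om)^{-1}$. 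The paper's approach buys you a probabilistic interpretation of $\Phi(\cdot)$ at every node (useful for later developments such as Theorem~\ref{thm:recursive.map} and the empirical-Bayes remark), and it economizes by handling the HMM case simultaneously. Your approach is more elementary and makes the RRDP conjugacy completely transparent: the only nontrivial step is the telescoping, and you handle it cleanly. Your closing caveat---that $p_{j,k}$ must depend on $(j,k)$ only through the node $A$ (e.g., through $j$) for $M_d(A)$ to be a well-defined mapping on $\A$---is a genuine point that the paper leaves implicit in its ``mapping-based notation''; it is worth stating.
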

\noindent Remark: $\Phi(\om)$ is the overall marginal likelihood. It is a function of the hyperparameters $\bphi$, and can be used for specifying the hyperparameters $\bphi$ in an empirical Bayes strategy using maximum marginal likelihood estimation (MMLE).
\vspace{0.2em}

\vspace{0.5em}

\begin{thm}
	\label{thm:latent_state}
	Suppose $\T \sim {\rm RRDP}(\blam)$ and given $\T$ under a Haar DWT, one models the wavelet coefficients conditionally independently given a set of latent state variables $\S=\{S_{j,k}:j=0,1,2\ldots,J, k=0,1,\ldots,2^j-1\}$ 
	\[ (w_{j,k},z_{j,k})\,|\,S_{j,k}=s \ind p_{j,k}^{(s)}(w,z\,|\,\bphi) \quad \text{for all $(j,k)$} \]
	where $S_{j,k}\in \{1,2,\ldots,K\}$ is a latent state variable associated with $(j,k)$. Also, suppose the collection of all latent variables is modeled as a top-down Markov tree (MT) with transition kernel $\brho$, $\S \sim {\rm MT}(\brho)$, i.e.,
	\[
	{\rm P}(S_{j,k}=s'\,|\,S_{j-1,\lfloor k/2 \rfloor} = s) =\rho_j(s,s')
	\]
	where $\rho_j(\cdot,\cdot)$ is the transition kernel of the Markov model which is allowed to be different over~$j$. Then the joint marginal posterior of $(\T,\S)$ can be specified fully as the following sequential generative process. Suppose $\T^{0},\T^{1},\ldots,\T^{j}$ and the latent variables up to level $j-1$ have been generated. (To begin, we have $j=0$ and $\T^{0}=\{\om\}$.) Then the state variables in level $j$, are generated from the following posterior transition probabilities
	\begin{align} 
	& {\rm P}(S_{j,k}=s'\,|\,S_{j-1,\lfloor k/2 \rfloor} = s,\T^{(j)},\by) \\
	 = & \; \rho_{j}(s,s') \sum_{d \in \D(A)} \lambda_{d}(A) M_{d}^{(s')}(A) \Phi_{s'}(A^{(d)}_{l})\Phi_{s'}(A^{(d)}_{r})/\Phi_{s}(A), 
	\end{align}
	where $A$ is the node $A_{j,k}$ in $\T^{j}$. Given $S_{j,k}=s'$, suppose $j<J$, then $\T^{j+1}$ is generated by drawing $D_{j,k}$ from a multinomial with probabilities $\tilde{\blam}(A)$ such that
	\begin{align} 
	& {\rm P}(D_{j,k}=d\,|\,S_{j,k}=s',\T^{(j)},\by) \\
	= & \frac{\lambda_{d}(A) M_{d}^{(s')}(A) \Phi_{s'}(A^{(d)}_{l})\Phi_{s'}(A^{(d)}_{r})}{\sum_{d' \in \D(A)}\lambda_{d'}(A) M_{d'}^{(s')}(A) \Phi_{s'}(A^{(d')}_{l})\Phi_{s'}(A^{(d')}_{r})}, 
	\end{align}
	where $M_{d}^{(s)}(A)$ is the marginal likelihood contribution from the wavelet coefficient on node $A$ if it is a node in $\T$, is divided in dimension $d$ in $\T$, and its latent state is $s$. That is,
	$M^{(s)}_{d}(A) = \int p^{(s)}_{j,k}(w_{d}(A),z\,|\,\bphi) \, dz$
	and $\bm{\Phi}=(\Phi_1,\Phi_2,\ldots,\Phi_K):\A \rightarrow [0,\infty)^{K}$ is a vector-valued mapping defined recursively as 
	$
		\Phi_s(A) = \sum_{s'}\rho_j(s,s') \sum_{d\in \D(A)} \lambda_{d}(A) M^{(s')}_{d}(A) \Phi_{s'}(A^{(d)}_{l})\Phi_{s'}(A^{(d)}_{r})
	$
	if $A$ is not atomic, and $\Phi_s(A) = 1$ if $A$ is atomic,
	for all $s\in \{1,2,\ldots,K\}$, where $j$ is the level of $A$.
\end{thm}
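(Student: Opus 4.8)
The argument runs parallel to the proof of Theorem~\ref{thm:independent_shrinkage} but must additionally carry the latent states and the edgewise Markov factors. The plan is, first, to reduce $p(\T,\S\mid\by)$ to an explicit unnormalized product over the nodes; second, to identify the recursion defining $\bm{\Phi}$ as exactly the operation of summing this product out from the leaves up, which also exhibits the normalizing constant; and third, to check that the stated top-down sampling scheme has a density that telescopes back to the normalized product.

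The first step exploits node-autonomy. Since every canonical RDP on $\om$ has the complete binary tree of depth $J$ as its shape, $\T$ is in bijection with the collection of split directions $\mathbf{D}=\{D_{j,k}\}$, and the Haar DWT under $\T$ is an orthonormal map $\by\mapsto\bw$ whose components are the overall scaling coefficient $c_{0,0}$ (which does not depend on $\mathbf{D}$) together with the wavelet coefficients $\{w_{D_{j,k}}(A_{j,k})\}$, each of which depends only on the observations inside its own node. Combining the orthonormal change of variables with the modeling assumptions --- $(w_{j,k},z_{j,k})\mid S_{j,k}$ conditionally independent across $(j,k)$ and $\S\sim\mathrm{MT}(\brho)$ --- and integrating out the signal coefficients $z_{j,k}$, I obtain
\[
p(\T,\S\mid\by)\ \propto\ \pi_0(S_{0,0})\prod_{j\ge 1,\,k}\rho_j\!\big(S_{j-1,\lfloor k/2\rfloor},S_{j,k}\big)\ \prod_{j<J,\,k}\lambda_{D_{j,k}}(A_{j,k})\,M^{(S_{j,k})}_{D_{j,k}}(A_{j,k}),
\]
where the $c_{0,0}$ factor cancels because it is common to all $\T$, the invalid-direction terms vanish because $\lambda_d(A)=0$ off $\D(A)$, and $\pi_0$ denotes the root-state prior (equivalently a degenerate $\rho_0$ that ignores its argument). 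This is the Markov-tree analogue of the node factorization underlying Theorem~\ref{thm:independent_shrinkage}.

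Next I would introduce, for a node $A$ at level $j$, the ``subtree marginal likelihood'' $\Psi_s(A):=\sum_{d\in\D(A)}\lambda_d(A)M^{(s)}_d(A)\,\Phi_s(A^{(d)}_{l})\Phi_s(A^{(d)}_{r})$, with $\Psi_s(A)=1$ for atomic $A$, so that $\Phi_s(A)=\sum_{s'}\rho_j(s,s')\Psi_{s'}(A)$ is precisely the recursion in the statement (and $\Phi_s\equiv1$ on atoms because $\rho_j(s,\cdot)$ is a probability distribution). By induction from the leaves upward I would show that $\Psi_s(A)$ equals the sum of the unnormalized product displayed above, restricted to the subtree rooted at $A$ with $S_A=s$ held fixed; the inductive step is exactly the statement that, once $D_A=d$ and $S_A=s$ are fixed, the two child subtrees contribute independently and each child's total sum --- now including its incoming transition $\rho_{j+1}(s,\cdot)$ --- is by definition $\Phi_s(A^{(d)}_{l})$ resp.\ $\Phi_s(A^{(d)}_{r})$. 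Specializing to $A=\om$ identifies the normalizing constant as $\sum_s\pi_0(s)\Psi_s(\om)$.

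Finally I would verify that the stated sequential process reproduces $p(\T,\S\mid\by)$. Because the unnormalized product factorizes across the subtrees hanging below any level, conditioning on everything already drawn at levels $\le j-1$ (which fixes $\T^{(j)}$ and in particular $S_{j-1,\lfloor k/2\rfloor}$) leaves, for the subtree rooted at $A=A_{j,k}$, only the factors $\rho_j(S_{j-1,\lfloor k/2\rfloor},S_{j,k})\,\lambda_{D_{j,k}}(A)\,M^{(S_{j,k})}_{D_{j,k}}(A)$ times the subtree sum below $A$; summing out that subtree recognizes $\Psi$ and $\Phi$ and yields ${\rm P}(S_{j,k}=s'\mid\cdot)=\rho_j(s,s')\Psi_{s'}(A)/\Phi_s(A)$ and, given $S_{j,k}=s'$, the direction law ${\rm P}(D_{j,k}=d\mid\cdot)=\lambda_d(A)M^{(s')}_d(A)\Phi_{s'}(A^{(d)}_{l})\Phi_{s'}(A^{(d)}_{r})/\Psi_{s'}(A)$, matching the statement; equivalently, multiplying all these conditionals over the tree telescopes, since the pair $\Phi_{s'}(A^{(d)}_{l})\Phi_{s'}(A^{(d)}_{r})$ produced at each node cancels the $\Phi$'s in the denominators of its children's state transitions and each $\Psi_{s'}(A)$ cancels between the state step and the direction step at $A$, leaving exactly the normalized product of the second paragraph. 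I expect the main difficulty to be bookkeeping rather than conceptual: one must be careful that node-autonomy is genuinely what decouples $p(\by\mid\T,\S)$ across nodes, that the incoming Markov factor $\rho_j$ attaches to the \emph{child} (hence sits inside $\Phi$ but outside $\Psi$), and that atomic nodes and the root state are absorbed by the degenerate conventions above; the continuous coefficients enter only through the (unit) orthonormal Jacobian and the one-dimensional integrals defining $M^{(s)}_d$.
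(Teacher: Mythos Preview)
Your proposal is correct and follows essentially the same route as the paper: establish by bottom-up induction that $\Phi_s(A)$ (equivalently your auxiliary $\Psi_s(A)$) is the marginal likelihood of the data in $A$ conditioned on the parent state, then read off the posterior transition and split-direction laws by Bayes' rule. The paper phrases the induction directly in terms of ${\rm P}(\by(A)\mid A\in\T,S(A_p)=s)$ rather than first writing out the full node-product and introducing $\Psi_s$, and it computes each conditional separately rather than checking a global telescoping identity, but these are presentational differences only.
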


Once the marginal posterior of $\T$ is computed through Theorem~\ref{thm:independent_shrinkage} or Theorem~\ref{thm:latent_state}, the full joint posterior is available as the conditional posterior of the rest of our model given $\T$ is available for common Bayesian wavelet regressions. (More details are given in Section~\ref{section:background}.)  Then standard Bayesian inference can proceed. 

In particular, one can draw samples for $(\T,\S)$ from their marginal posterior given  in Theorem~\ref{thm:latent_state}. Then given $(\T,\S)$, one can further sample $\bz$ from the conditional posterior corresponding to the chosen wavelet regression model, and Bayesian inference can proceed in the usual manner. For example, one can obtain posterior samples of the underlying function $\vect{f}$ by first drawing samples
\[(\T^{(1)},\S^{(1)},\bz^{(1)}),(\T^{(2)},\S^{(2)},\bz^{(2)}),\ldots,(\T^{(B)},\S^{(B)},\bz^{(B)}).\]
Then for the $b$th draw, we can compute the corresponding function $\vect{f}^{(b)}$ using the inverse DWT
\[
\vect{f}^{(b)} = \pi_{\T^{(b)}}^{-1} \left(W^{-1} \bz^{(b)}\right),
\]
where $\pi_{\T}^{-1}$ denotes the inverse permutation corresponding to an RDP $\T$. Based on the posterior samples of $\vect{f}$, we can construct pointwise credible bands and estimate the posterior mean ${\rm E}(\vect{f}|\by)$. 
We can apply Rao-Blackwellization and obtain the following estimate for the posterior mean
\[
{\rm E}(\bm{f}\,|\,\by) \approx \frac{1}{B} \sum_{b=1}^{B} \pi_{\T^{(b)}}^{-1} \left(W^{-1} \text{E}(\bz^{(b)} | \T_i^{(b)}, \by) \right). 
\]
For several popular Bayesian wavelet regression models, 
the posterior mean can actually be computed analytically through message passing (MP) without posterior sampling 
when the Haar basis is adopted.
We next turn to briefly reviewing these wavelet regression models in Section~\ref{section:background}, and defer the MP algorithm (Theorem~\ref{thm:recursive.map}) to Supplementary Materials. \color{black}

\subsubsection{Examples of compatible Bayesian wavelet regression models\label{section:background}} 
So far we have kept the description of the Bayesian wavelet regression model general, using generic notations such as $p(w_{j,k},z_{j,k}\,|\,\bphi)$ and $p(w_{j,k},z_{j,k}\,|\,S_{j,k},\bphi)$ without spelling out the details. Next we describe some of the most popular Bayesian wavelet regression models. They indeed take these general forms and therefore our framework is applicable to them.

A popular class of Bayesian wavelet regression models for achieving adaptive shrinkage of $\bz$ utilize the so-called spike-and-slab prior, which introduces a latent binary random variable $S_{j,k}$ for each $(j,k)$ such that
\begin{align}
\label{eq:spike_and_slab}
z_{j,k}\,|\,S_{j,k} \ind (1 - S_{j, k}) \delta_0(z_{j,k}) + S_{j, k} \gamma(z_{j, k} | \tau_j, \sigma) 
\end{align}
where $\delta_0(\cdot)$ is a point mass at 0, and $\gamma(\cdot | \tau_j, \sigma)$ is a fixed unimodal symmetric density that possibly depends on $\sigma$ and another scale parameter $\tau_j$.  A common choice of $\gamma(\cdot | \tau_j, \sigma)$ is the normal distribution with mean 0 and variance $\tau_j \sigma^2$, denoted by $\phi(\cdot | 0, \sqrt{\tau_j} \sigma)$, while heavy-tailed priors including the Laplace and quasi-Cauchy distributions~\citep{Johnstone+Silverman:05} also enjoy desirable theoretical properties. Specifically, the function $\gamma(x \mid \tau_j, \sigma)$ is 
$$\gamma(x \mid \tau_j, \sigma) = a \exp(-a |x / \sigma|)/(2\sigma)$$
for Laplace priors where $a = \sqrt{2/\tau_j}$, and 
$$\gamma(x \mid \tau_j, \sigma) = (2 \pi)^{-1/2}\{1 - |x/\sigma| \cdot \tilde{\Phi}(|x/\sigma|)/\phi(x/\sigma)\}/\sigma$$
for quasi-Cauchy priors with $\tilde{\Phi}(x) = \int_x^{\infty} \phi(t \mid 0, 1) dt$.

Many authors \citep{Chipman1997,Clyde+George:00,Brown2001,Morris2006} adopt independent priors on the latent shrinkage state variable $S_{j,k}$
\[ S_{j,k}  \ind {\rm Bern}(\rho_{j,k}).\]
One way to specify $\brho = \{\rho_{j,k}, 0 \leq k < 2^j, 0 \leq j \leq J - 1\}$ that properly controls for multiplicity is $\rho_{j,k} \propto 2^{-j}$. The specification of $\btau = \{\tau_j, 0 \leq j \leq J - 1\}$ of course depends on the choice of $\gamma(\cdot | \tau_j, \sigma)$. For instance, if one uses $\tau_j=2^{-\alpha j}\tau_0$ for the normal and Laplace prior, this leads to the reduced parameter $\btau = (\alpha, \tau_0)$. One can use $\tau_j \equiv 1$ for the quasi-Cauchy prior. Other authors \citep{crouse:1998,ma&soriano:2018fanova} show that introducing Markov dependency into the latent shrinkage states can substantially improve inference by allowing effective borrowing of information across the location and scale. 

Carrying out inference under WARP requires the conditional posterior of $z_{j,k}$ given $(\T,\S)$. For the above popular models, this posterior is given by
\[
z_{j,k}\,|\,S_{j, k}, \by \ind (1 - S_{j, k}) \delta_0(z_{j, k}) + S_{j, k} f_1(z_{j, k} | w_{j,k}, \tau_j, \sigma),
\]
where $f_1(z_{j, k} \mid w_{j, k}, \tau_j, \sigma) \propto \phi(w_{j, k} \mid z_{j, k}, \sigma) \cdot \gamma(z_{j, k} \mid \tau_j, \sigma)$. The function $f_1(z_{j, k} \mid w_{j, k}, \tau_j, \sigma)$ is analytically available if $\gamma(\cdot \mid \tau_j, \sigma)$ is the density of normal, Laplace, or quasi-Cauchy distributions. 
For the normal prior where $\gamma(\cdot \mid \tau_j, \sigma) = \phi(\cdot \mid 0, \sqrt{\tau_j} \sigma)$, $f_1(\cdot \mid w_{j, k}, \tau_j, \sigma)$ is the density of ${\rm N}(w_{j, k}/(1 + \tau_j^{-1}), \sigma^2/(1 + \tau_j^{-1}))$. For Laplace and quasi-Cauchy priors, analytical forms of $f_1(\cdot \mid \tau_j, \sigma)$ are available in~\cite[Sec.~2.3]{Johnstone+Silverman:05}.  As it is often the mean corresponding to $f_1$ that is needed for posterior estimation, we here give the closed forms of the means by integrating out $z_{j, k}$ with respect to its posterior distribution. Let the corresponding mean function be $\mu_1(w_{j, k}, \tau_j, \sigma)$, which is given by 
\[
w_{j, k}/(1 + \tau_j^{-1})
\]
for normal priors, 
\[
w_{j, k} -  \sigma \frac{a \{ e^{-a w_{j, k}/\sigma} \Phi(w_{j, k}/\sigma - a) - e^{aw_{j, k}/\sigma} \tilde{\Phi}(w_{j, k}/\sigma + a) \}}{e^{-a w_{j, k}/\sigma} \Phi(w_{j, k}/\sigma - a) + e^{aw_{j, k}/\sigma} \tilde{\Phi}(w_{j, k}/\sigma + a)}
\]
for Laplace priors, and 
\[
{w_{j, k}} \left\{1 - \exp\left(-\frac{w_{j, k}^2}{2\sigma^2}\right)\right\}^{-1} - 2 \left( \frac{w_{j, k}}{\sigma}\right)^{-1}
\]
for quasi-Cauchy priors.

For these wavelet regression models that adopt the spike-and-slab setup, by Theorem~\ref{thm:latent_state} we can derive a fully conjugate posterior that takes the same form as the prior. In particular, for each $A \in \A$, under the normal prior for $\gamma(\cdot \mid \tau_j, \sigma)$, applying Theorem~\ref{thm:latent_state} shows that
\begin{itemize}
	\item The marginal likelihood contribution from the data within node $A$ if $A$ is divided in dimension $d$ is: 
$$M_{d}^{(s)}(A)=\frac{1}{\sqrt{2\pi(1+s\tau_j)\sigma^2}} \exp\left\{-\frac{w_{d}(A)^2}{2\sigma^2(1+s\tau_j)}\right\}$$ for $s=0,1.$ 
	
	\item The posterior spike probability on $A$ if $A$ is divided in dimension $d$ is:  $$\tilde{\rho}_{d}(A)=\rho(A)M^{(1)}_{d}(A)/M_{d}(A),$$
	where $M_{d}(A)=\rho(A) M_{d}^{(1)}(A) + (1-\rho(A)) M_{d}^{(0)}(A).$
\end{itemize}  

In most practical problems, the variation in the function value within each partition block will eventually become negligible with respect to the noise level, and so further division within such homogeneous blocks will not improve statistical efficiency and could lead to overfitting. For example, in~\ref{fig:RDP} the partition in the upper left block (Level 3) along with its descendants is not necessary. Thus it is also desirable to incorporate adaptivity in the depth of the wavelet tree along each subbranch and allow it to be terminated earlier than reaching level $J$ depending on how smooth the function is across the index space. This consideration is closely related to the idea of adaptive block shrinkage \citep{Cai1999} in the frequentist wavelet regression analysis. Once there is little evidence for any interesting structure within a subset of the index space, then the function value within that subset can be shrunk to a constant. That is, the wavelet tree is ``pruned'' there. Remarkably, wavelet models with such pruning are also compatible with our WARP framework and can be readily achieved by introducing a pruning indicator to accompany $S_{j, k}$. We refer interested readers to Supplementary Materials for additional technical details on how to incorporate pruning.

For the Haar basis, the posterior mean ${\rm E}(\vect{f}|\by)$ for the above wavelet models can be evaluated analytically through recursive message passing without any Monte Carlo sampling for Bayesian wavelet regression models that adopt the spike-and-slab setup along with optional pruning of the wavelet tree, which contains the models without optional pruning as special cases with zero pruning probabilities. For completeness, we describe this strategy in the Supplementary Materials and will use it to compute ${\rm E}(\vect{f}|\by)$ in our numerical examples.
\color{black}

\section{Experiments\label{sec:experiments}}

In this section, we conduct extensive experiments to evaluate the performance of our proposed framework in the image reconstruction task in terms of both estimation accuracy and computational scalability. Applications to other image processing tasks are discussed in \ref{sec:energy}. We compare WARP to a number of state-of-the-art wavelet, non-wavelet, and deep neural network-based methods available in the literature for denoising 2D images. We provide results on denoising 3D images in the Supplementary Materials. Throughout these experiments we apply WARP with independent spike-and-slab Bayesian wavelet regression models under the Haar basis along with optional pruning. 
\color{black} 

Our prior specification is as follows: $\rho(A) = \min(1, 2^{-\beta j} C)$ for $A$ in the $j$th resolution (for $j<J$), $\tau_j = 2^{-\alpha  j} \tau_0$, and $\eta(A) = \eta_0$ for all $A$; we set $\sigma^2$ to an estimate based on the finest scale wavelet coefficients~\citep{donoho1995adapting}; all other parameters in $\bphi = (\alpha, \beta, \sigma^2, \tau, C, \eta_0)$ are estimated by maximizing the marginal likelihood (available in a closed form as $\Phi(\om)$ from our recursive message passing algorithm) at a set of grid points. Supplementary Materials contain a sensitivity analysis showing that WARP is generally robust to the values of its hyperparameters. Therefore we recommend a grid search on a small set rather than a full optimization as the default tuning method. Gaussian noise with standard deviation $\sigma$ is added to the true images and we apply all methods to the noisy observations for image reconstruction. For WARP, we use the posterior mean as the reconstructed image, which is analytically attainable through Theorem~\ref{thm:recursive.map}. 

\subsection{Image reconstruction using ImageNet data\label{sec:2D} }
We use 100 test images randomly chosen from the ImageNet dataset~\citep{deng2009imagenet} to evaluate selected methods in reconstructing images of various structures. ImageNet is originally used for large-scale visual recognition in the community of computer vision, and we here use its Fall 2011 release (consisting of 14,197,122 urls). We compare our method with eight existing wavelet and non-wavelet approaches with available software: 1-dimensional Haar denoising operated on vectorized observation~\citep{Johnstone+Silverman:05} or 1D-Haar, translation-invariant 2D Haar estimation~\citep{Wil+Now:04} or TI-2D-Haar, shape-adaptive Haar wavelets~\citep{fryzlewicz2016shah} or SHAH, adaptive weights smoothing~\citep{polzehl2000adaptive} or AWS, Bayesian smoothing method using the Chinese restaurant process~\citep{Li+Ghosal:14} or CRP, coarse-to-fine wedgelet~\citep{Cas+:04} or Wedgelet, nonparametric Bayesian dictionary learning proposed by~\cite{Zhou+:12} or  BPFA, and the conventional running median method or RM. We apply the cycle spinning technique to remove visual artifacts in image reconstruction~\citep{Coi+Don:95, Li+Ghosal:15} for the methods of WARP, 1D-Haar, SHAH, AWS, CRP, Wedgelet and RM, by averaging 121 local shifts (a step size up to 5 pixels in each direction). TI-2D-Haar is translation invariant and BPFA includes cycle spinning based on patches, and thus no additional cycle spinning is needed for these two methods. For each method, we calculate the mean squared error (MSE) and mean absolute error (MAE) to measure its accuracy, and time each method based on one replication ran on MacBook Pro with 2.7 GHz Intel core i7 CPU and 16GB RAM. We implement the methods using publicly available code, either in R (1D-Haar, SHAH, and AWS) or Matlab (TI-2D-Haar, CRP, Wedgelet, BPFA, and RM). WARP is available in both R and Matlab, and we use the R version to time it. 

\ref{figure:imagenet} presents the average MSEs and MAEs of all methods where $\sigma$ varies from 0.1 to 0.7. We can first see that the proposed hierarchical adaptive partition improved the basic wavelet regression significantly (compare 1D-Haar vs. WARP) for all scenarios. 
In fact, WARP is uniformly the best method under both metrics for all scenarios, with the performance lead over other methods widening as the noise level increases. The sensitivity analysis in the Supplementary Materials indicates that the method of WARP is robust to hyperparameters and choices of $\gamma$. 

\begin{figure}
	\centering
	\begin{tabular}{cc}
		\includegraphics[page = 1, width = 0.45\linewidth]{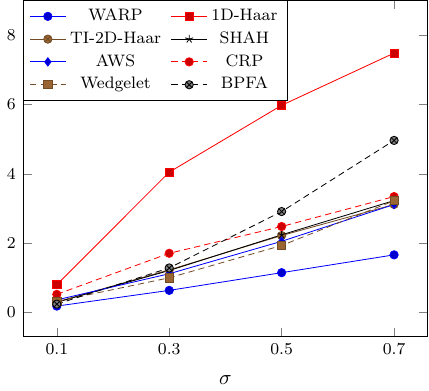} &
		\includegraphics[page = 2, width = 0.45\linewidth]{figure/plotCurve.pdf} \\
		(a) MSE ($\times 10^{-3}$)  & 
		(b) MAE ($\times 10^{-2}$) 
	\end{tabular} 
	\caption{\label{figure:imagenet}Comparison of various methods based on 100 randomly selected $512 \times 512$ images from ImageNet. The method of running median is off the chart (not plotted here). The maximum standard errors at each $\sigma$ among all methods are $(0.001,0.042,0.071,0.058)\times10^{-3}$ for MSE, $(0.002,0.062, 0.065, 0.058) \times 10^{-2}$ for MAE, respectively. The running time of each method in seconds is 7.2 (WARP), 76.9 (SHAH), 7.9 (AWS), 10.7 (CRP), 8.7 (Wedgelet), $2.1 \times 10^3$ (BPFA), and less than 1 (1D-Haar, TI-2D-Haar, RM), based on one test image without cycle spinning at $\sigma$ = 0.3 including both tuning and estimation steps.} 
\end{figure}

WARP is computationally efficient, benefiting from the conjugacy of random recursive partition and closed form expression in Theorem~\ref{thm:recursive.map}. WARP is the fastest adaptive approach among SHAH, AWS, CRP, Wedgelet, and BPFA. (The computing times are given in the caption of \ref{figure:imagenet}.) Section \ref{sec:scalability} further compares the scalability of selected methods using images of various sizes.

\newcommand{\plotF}[1]{
	\includegraphics[width = 0.2 \linewidth, trim = 20 0 0 0, clip]{figure/obs#1/true1type21d.png} & 
	\includegraphics[width = 0.3 \linewidth, trim = 20 0 0 0, clip]{figure/obs#1/saving1type21d.png} & 
	\includegraphics[width = 0.3\linewidth, trim = 20 0 0 0, clip]{figure/obs#1/saving1type22d.png}\\}

\begin{figure*}
	\setlength{\tabcolsep}{3pt}
	\centering
	\begin{tabular}{ccc}
		\plotF{1} \plotF{2} \plotF{22} 
		(a) true & (b) WARP vs 1D-DWT  & (c) WARP vs 2D-DWT  \\
	\end{tabular} 
	\caption{Comparison of energy concentration for three methods---WARP,1D Haar, and 2D Haar---on ImageNet images. Column (a) plots the true image, Column (b) compares WARP versus 1D DWT, and Column (c) compares WARP versus 2D DWT. In Columns (b) and (c), the red and blue lines correspond to the right $y$ axis, plotting the number of coefficients to attain a specific energy level ($x$ axis) by deterministic DWT and WARP, respectively. The black curve corresponds to the left $y$ axis and is 100\% less the ratio of the blue and red curves, indicating the percentage reduction in the number of wavelet coefficients to achieve the same sum of squares by WARP.}
	\label{fig:energy} 
\end{figure*}

\subsection{Scalability\label{sec:scalability}} 
Next we verify the linear complexity of the WARP framework using both 2D and 3D images. Usually there are various ways to tune each method, and we focus on the estimation step given tuning parameters for all methods to make a fair comparison. For WARP, one actually may choose the tuning parameters from a smaller image by downsampling without loss of much accuracy, in view of its insensitivity to hyperparameters (Section D in the Supplementary Materials). 

\ref{fig:scalability} (a) compares the scalability of selected methods in~\ref{figure:imagenet}; we exclude 1D-Haar and RM as their reconstructions are highly inaccurate, and BPFA as it scales poorly even at $512 \times 512$ images. We can see that the empirical running time approximately follows a linear function of the number of locations. In fact, WARP takes only about 2 minutes for a large image of $4096 \times 4096$ that contains 17 million pixels, and 5.3 seconds for an image of 1024 by 1024. \ref{fig:scalability} suggests that Wedgelet and SHAH take quadratic time or even more, while TI-2D-Haar, AWS, and CRP takes linear time, but their performances are substantially inferior to that of WARP as shown in \ref{figure:imagenet}. CRP seems to have a smaller slope than WARP, but it requires considerably longer tuning time than WARP according to the total running time with the tuning step included in the caption of~\ref{figure:imagenet}, at least based on its latest version of implementation to date. 

It is worth noting that while many state-of-the-art methods designed for 2D images such as Wedgelet, TI-2D-Haar, and BPFA require substantial modifications for a new dimensional setting, such as 3D images, the proposed WARP framework is directly applicable to $m$-dimensional data without modification, with the same linear scalability as suggested by \ref{fig:scalability} (b). 

\begin{figure}
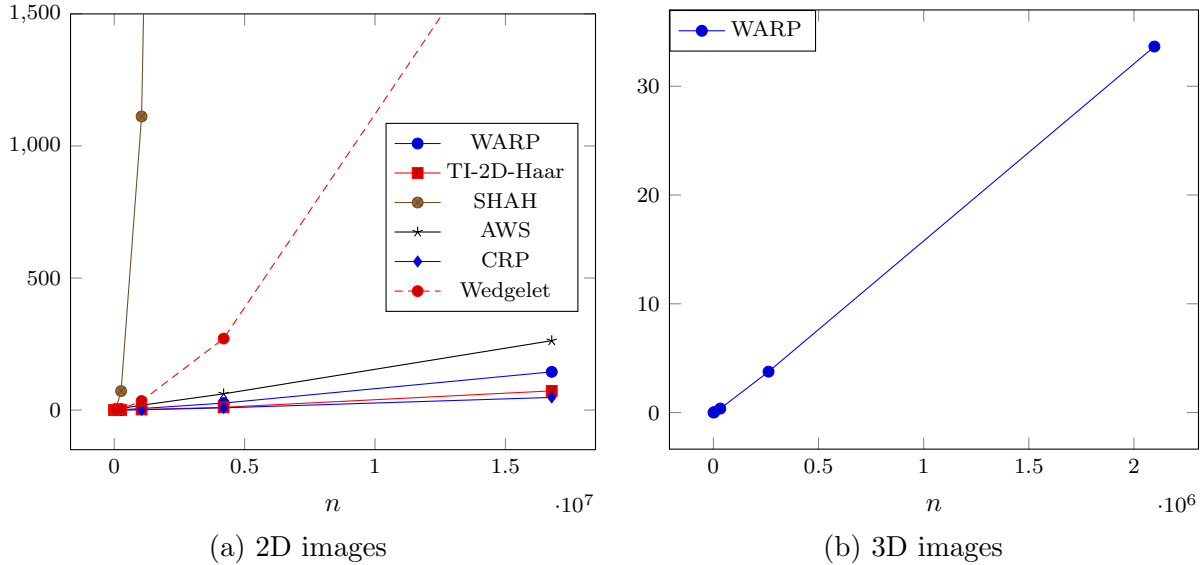

	\begin{tabular}{cc}
		\includegraphics[page = 5, width = 0.48\linewidth]{figure/plotCurve.pdf}& 
		\includegraphics[page = 6, width = 0.46\linewidth]{figure/plotCurve.pdf} \\
		(a) 2D images & (b) 3D images 
	\end{tabular}
	\caption{Scalability of various methods using 2D and 3D images. Each line is the running time taken by the estimation step ($y$-axis) using the corresponding method versus the number of locations in the image ($x$-axis). }
	\label{fig:scalability} 
\end{figure}

\subsection{Comparison with deep neural networks} \label{sec:DnCNN} 
In this section, we compare the proposed method WARP with convolutional neural networks. In particular, we apply WARP and the denoising convolutional neural networks (DnCNN) proposed in \citep{Zhang2017} to two popular benchmark datasets: the twelve widely used test images (\ref{fig:set12} in Supplementary Materials) and the BSD68 data~\citep{MartinFTM01} which contains 68 natural images from the Berkeley segmentation dataset. DnCNNs have been reported to achieve the state-of-the-art performance in various image processing tasks \citep{Zhang2017}. We adopt a pre-trained model available in Matlab for DnCNN. 

\ref{table:DnCNN} reports the MSEs of WARP and DnCNN on the 12 widely used test images and BSD68 (averaged) at three noise levels when $\sigma = 0.2, 0.4, 0.6$. We can see that for light noise when $\sigma = 0.2$, WARP leads to smaller MSEs in five out of twelve images (i.e., Image 2, 3, 6, 8, 12) and gives comparable performance in other images. WARP gives uniformly smaller MSEs when $\sigma = 0.4$ (intermediate noise) and constantly outperforms DnCNN by one order of magnitude when $\sigma = 0.6$ (large noise), which is consistent with our observations in the ImageNet experiment. Besides the excellent performance of WARP, it is worth mentioning that unlike DnCNN which requires substantially more extensive pre-training and tuning, WARP does not require pre-training at all, and its small amount of tuning can be completely automated without user intervention.  We do acknowledge that the performance of the pre-trained DnCNN might be improved with more extensive training.

\begin{table*}[ht]
	\centering
	\caption{MSE ($\times 10^{-3}$) of WARP and DnCNN on 12 widely used test images and BSD68.} \label{table:DnCNN}
	\begin{tabular}{ccccccccccccccr}
		\hline
		$\sigma$ &  & \multicolumn{12}{c}{12 widely used test images} & BSD68 \\
		 &  & 1 & 2 & 3 & 4 & 5 & 6 & 7 & 8 & 9 & 10 & 11 & 12 &  \\ 
		\hline
		\multirow{2}{*}{$0.2$}
		& WARP & 2.89 & 1.42 & 2.49 & 4.07 & 3.65 & 3.39 & 3.17 & 1.69 & 4.33 & 2.55 & 2.50 & 2.59 & 3.75 \\ 
		& DnCNN & 2.77 & 1.65 & 2.55 & 3.37 & 2.90 & 3.49 & 2.88 & 1.73 & 3.98 & 2.44 & 2.33 & 2.68 & 3.35 \\ \hline 
		\multirow{2}{*}{$0.4$}
		& WARP & 5.69 & 3.15 & 5.57 & 7.97 & 8.21 & 6.01 & 6.64 & 3.15 & 6.38 & 4.46 & 4.16 & 4.67 & 6.10 \\ 
		& DnCNN & 15.64 & 14.14 & 15.39 & 15.93 & 15.85 & 16.56 & 15.53 & 13.28 & 15.37 & 14.57 & 13.49 & 14.69 & 14.90 \\ \hline 
		\multirow{2}{*}{$0.6$}
		& WARP & 8.23 & 4.31 & 8.12 & 10.86 & 12.26 & 8.36 & 9.57 & 4.40 & 7.96 & 6.06 & 5.66 & 6.22 & 7.88 \\ 
		& DnCNN & 75.83 & 71.12 & 73.34 & 73.49 & 71.73 & 77.06 & 75.16 & 70.31 & 71.31 & 72.75 & 69.96 & 71.46 & 71.65 \\ 
		\hline
	\end{tabular}
\end{table*}
\color{black} 

\section{Enhanced energy concentration and beyond 2D image reconstruction \label{sec:energy} } 
The excellent performance of WARP in image reconstruction suggests that the model is capable of identifying efficient representation of the underlying structure in a variety of real images as it is designed to achieve. This also suggests that extracting the underlying representation can potentially benefit a variety of other downstream processing tasks. In this section we first use a concept of ``energy concentration" to examine how such efficiency is achieved and then discuss the potential applicability in other image processing tasks such as compression.

Energy (or information) concentration under wavelet transforms can be quantified by the number of wavelet coefficients needed to retain a given proportion of the sum of squares of the underlying function. An efficient wavelet representation will only need a small number of coefficients to capture most of the information contained in the function (as measured in terms of its sum of squares). Such a representation leads to high signal-to-noise ratios on a small number of coefficients that will facilitate all downstream processing tasks.

Next we compare energy concentration under WARP to that under classical 1D and 2D wavelet representations to quantify the improvement in energy concentration WARP achieves through adaptively identifying good permutations. To this end, we use the same ImageNet data as used  in~\ref{sec:experiments}. For each image, we draw a sample from the posterior distribution of partition trees produced by WARP, and compute the number of coefficients required to attain a range of energy levels (i.e., the total sum of squares) on a noisy observation at $\sigma = 0.1$ and compare them to those required under standard 1D and 2D wavelet transforms. 
\ref{fig:energy} presents the numbers of wavelet coefficients required over the proportion of the sum of squares for three representative images. 

Focusing on the proportion of the sum of squares from 0.85 to 0.95, we can see that the adaptive representation achieved by WARP requires substantially fewer numbers of wavelet coefficients (the red solid lines with scales on the right of each plot) to attain the same energy level than traditional 1D and 2D Haar DWT (the blue dashed lines with scales on the right of each plot).  In \ref{fig:energy}, we also plotted the percentage reduction in the number of coefficients (the black line with scales on the left of each plot) at each energy level. The largest coefficient saving of WARP is (80\%, 70\%, 70\%) compared to 2D DWT, and this saving becomes (97\%, 99\%, 90\%) when compared to 1D DWT. Enhanced energy concentration of WARP is observed in a wide range of test images in the database, and the extent of improvement in energy concentration varies according to the abundance of asymmetric structures present in the underlying image.

The improved energy concentration of WARP is expected to benefit a variety of downstream processing tasks beyond image denoising. For example, efficient image compression can be achieved using the posterior mode of the WARP model, which provides a sparse coding of the image. Coupling this idea with a pair of encoder and decoder, we introduce an algorithm for efficient image and video compression in a follow-up paper~\cite{Liu2020CARP}. Interested readers may refer to that paper for additional numerical experiments involving a variety of datasets, including 2D ImageNet, 3D medical image, real-life YouTube videos, and surveillance videos, in which
WARP-based compression substantially outperforms several state-of-the-art compression approaches.

\section{Application to retinal optical coherence tomography\label{sec:application}} 
We apply the proposed method to a dataset of optical coherence tomography (OCT) volumes. OCT provides a non-invasive imaging modality to visualize cross-sections of tissue layers at micrometer resolution, and thus is instrumental in various medical applications especially for the diagnosis and monitoring of patients with ocular diseases~\citep{huang1991optical,grewal2013diagnosis,virgili2015optical, cuenca2018cellular}. The accurate interpretation of OCT images may require the involvement of both retina specialists and comprehensive ophthalmologists, and this task is compounded by heavily noised observations at a low signal-to-noise ratio due to sample-based speckle and detector noise~\citep{keane2012evaluation,shi2015automated, Fang2017}. Therefore, reconstruction of OCT images is necessary to improve both manual and automated OCT image analysis, and is increasingly important when OCT images are used to extract objective and quantitative assessment in ophthalmology which is touted as one advantage of OCT in clinical practice~\citep{virgili2015optical}. 

We use the OCT data available at \url{http://people.duke.edu/~sf59/Fang_TMI_2013.htm}, acquired by a Bioptigen SDOCT system (Durham, NC, USA) at an axial resolution of $\sim$ 4.5 $\mu$m. We apply the methods of TI-2D-Haar, SHAH, AWS, CRP, Wedgelet, BPFA, and WARP, to two noisy slices (plotted as ``Obs." in~\ref{figure:OCT}). We also have access to a registered and averaged image by averaging 40 repeatedly sampled scans~\citep{Fang2017}, which is referred to as the ``noiseless'' reference image and is used to compare the quality of reconstructed images. From the results in~\ref{figure:OCT}, we clearly see that WARP gives the best global qualitative metric using MSE and MAE among all methods in comparison. 

Visual comparison provides a detailed assessment of reconstructed images on local features that might be clinically relevant. For the first observation in~\ref{figure:OCT}, we can see WARP distinguishes all layers well (the boxed region in the noiseless image), especially compared to TI-2D-Haar and AWS whose reconstructions are blurred across layers. For the second observation, we observe a separation of the posterior cortical vitreous from the internal limiting membrane in the noiseless image, which shows the potential to progress to vitreomacular traction (VMT)~\citep{duker2013international}. This separation becomes less clear if using TI-2D-Haar (especially the left proportion), although TI-2D-Haar gives MSE and MAE that are closer to WARP than the other methods. For both observations, there is still substantial noise left in the denoised images by SHAH, and AWS gives a reconstruction exhibiting undesirable patches. This study confirms that WARP is capable to denoise images while keeping important features present in the image, due to its ability to adapt to the geometry of the underlying structures. 

\newcommand{\OCTalpha}{2}
\newcommand{\OCTbeta}{15}
\begin{figure*}[!h] 
	\caption{Two retinal OCT datasets (titled ``Obs.") and reconstructed images using TI-2D-Haar, SHAH, AWS, CRP, Wedgelet, BPFA, and WARP. The two metrics following each method are the MSE ($\times 10^{-4})$ and MAE ($\times 10^{-2})$ respectively. The ``noiseless" reference is an registered and averaged image. }
	\label{figure:OCT}
	\begin{tabular}{c@{\hskip 0.1in}c@{\hskip 0.1in}c}
		\includegraphics[width=0.31\linewidth]{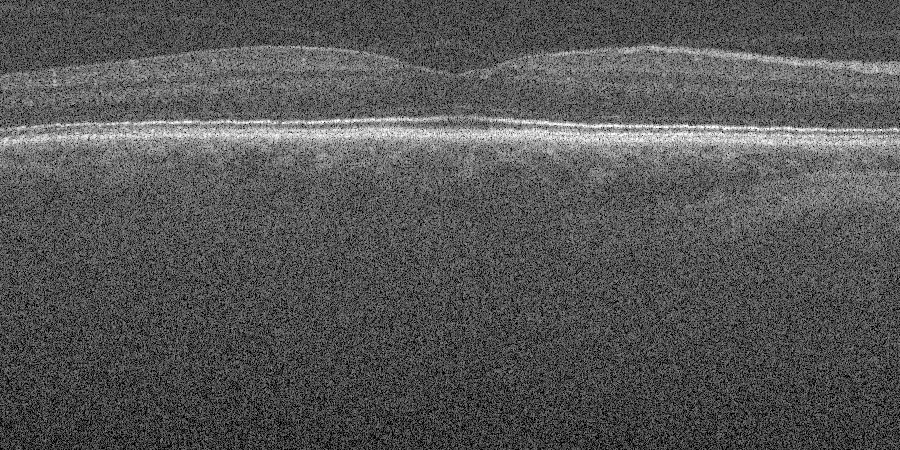} &
		\includegraphics[width=0.31\linewidth]{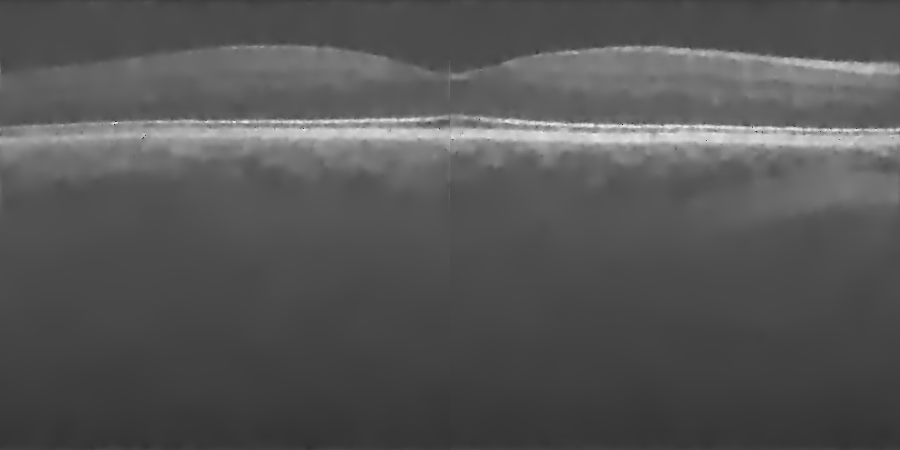} &
		\includegraphics[width=0.31\linewidth]{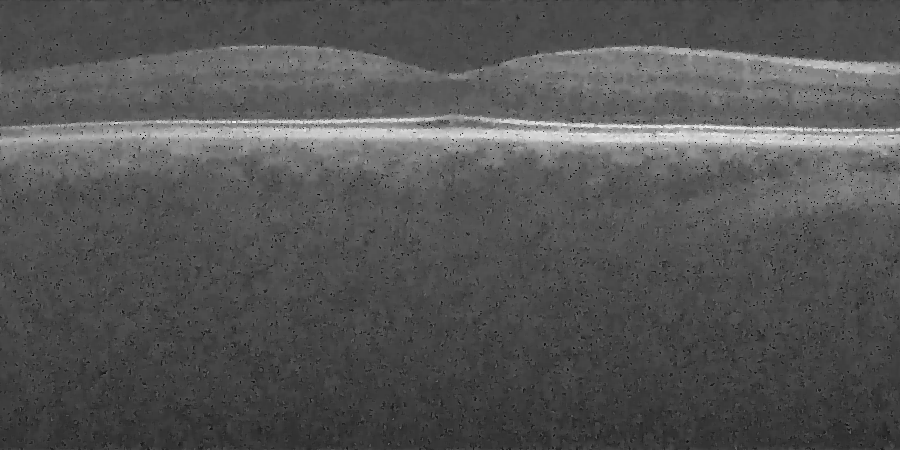} \\
		Obs. (152.4, 9.9) & TI-2D-Haar (7.3, 2.0) & SHAH (15.7, 2.7) \\
		
		\includegraphics[width=0.31\linewidth]{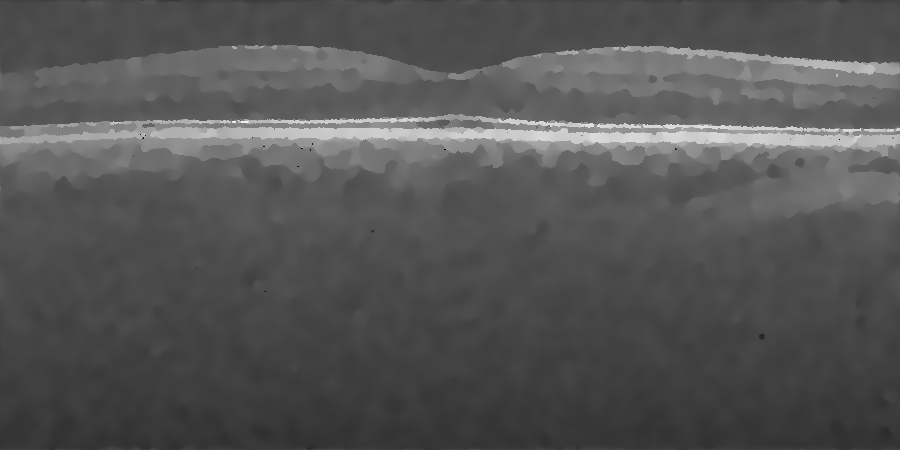} &
		\includegraphics[width=0.31\linewidth]{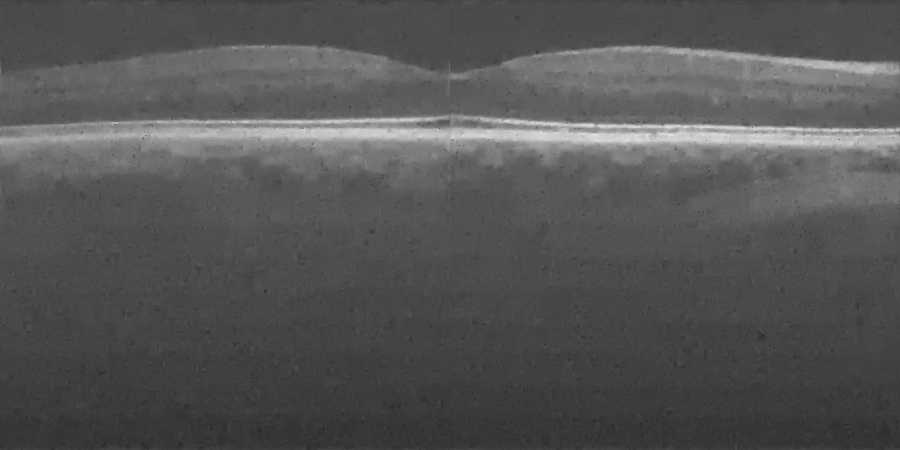} &
		\includegraphics[width=0.31\linewidth]{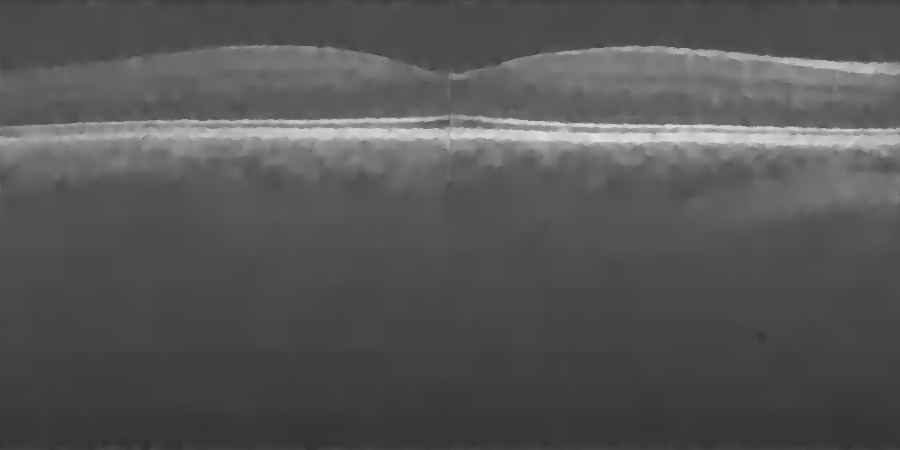} \\
		
		AWS (9.4, 2.2) & CRP (7.4, 2.1) & Wedgelet (7.7, 2.1) \\
		
		\includegraphics[width=0.31\linewidth]{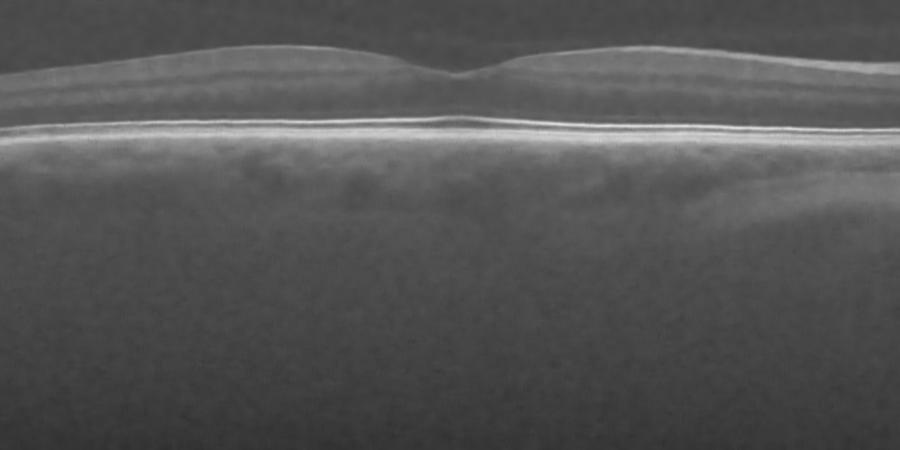} &
		\includegraphics[width=0.31\linewidth]{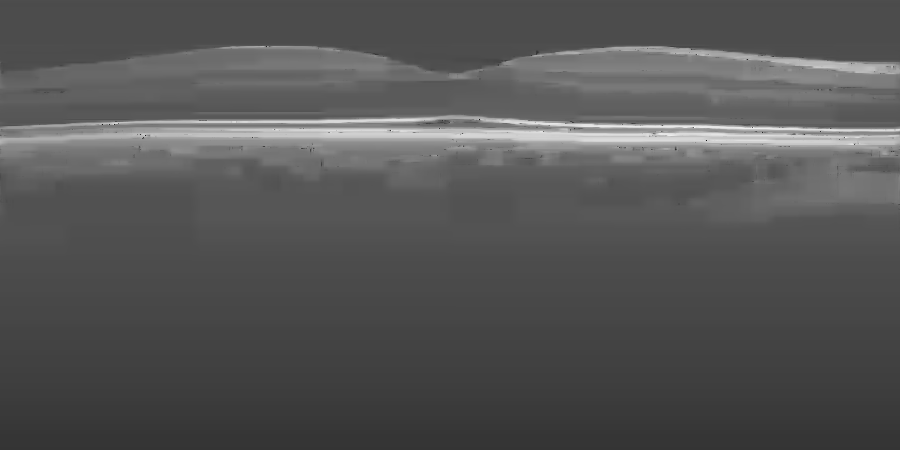} &
		\includegraphics[width=0.31\linewidth]{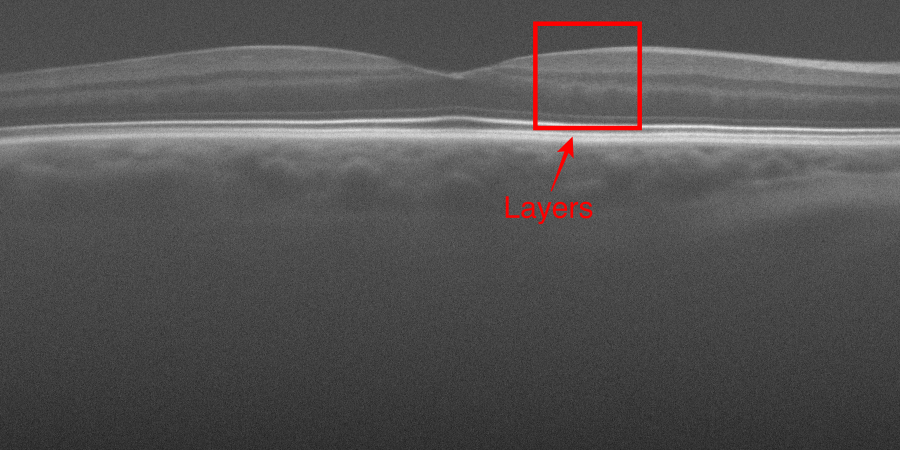} \\
		\vspace{0.2in}  
		BPFA (7.8, 2.2) & WARP (6.5, 1.9) & ``noiseless" \\

		\includegraphics[width=0.31\linewidth]{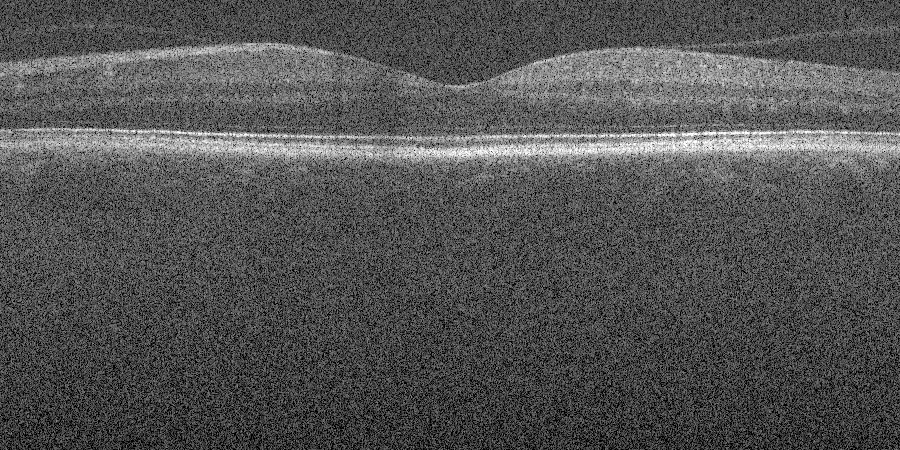} &
		\includegraphics[width=0.31\linewidth]{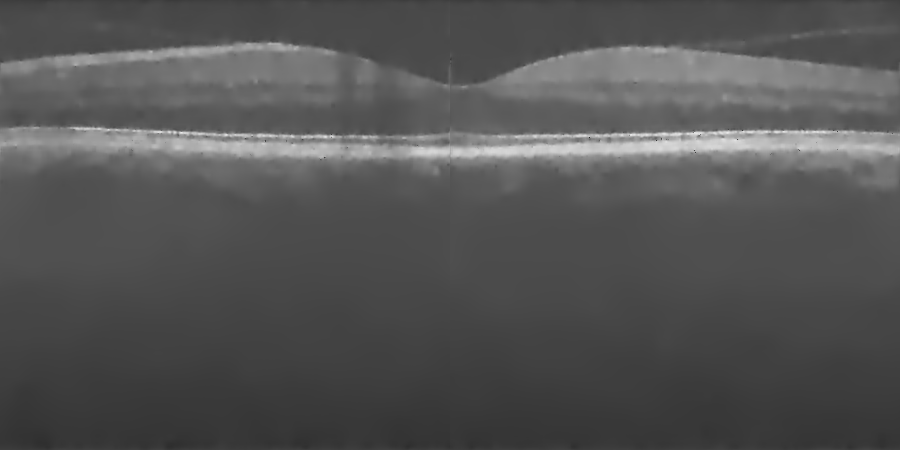} &
		\includegraphics[width=0.31\linewidth]{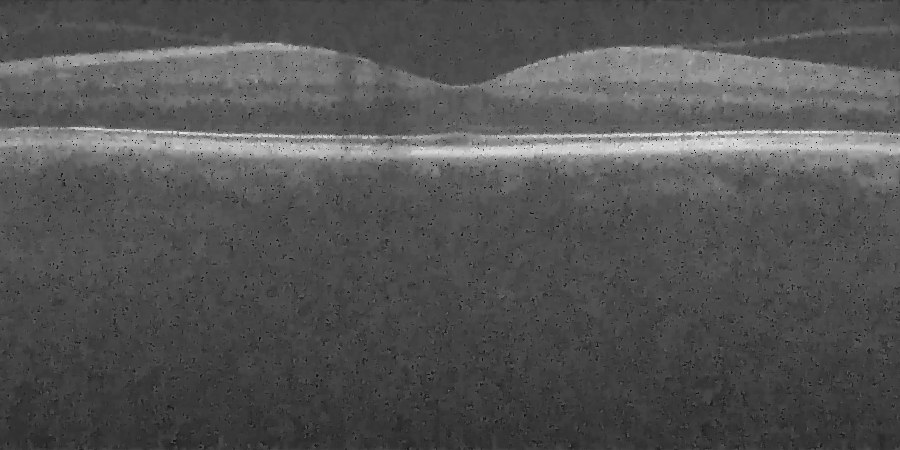} \\
		
		Obs. (159.4, 10.1) & TI-2D-Haar (10.9, 2.4) & SHAH (18.4, 3.0) \\
		
		\includegraphics[width=0.31\linewidth]{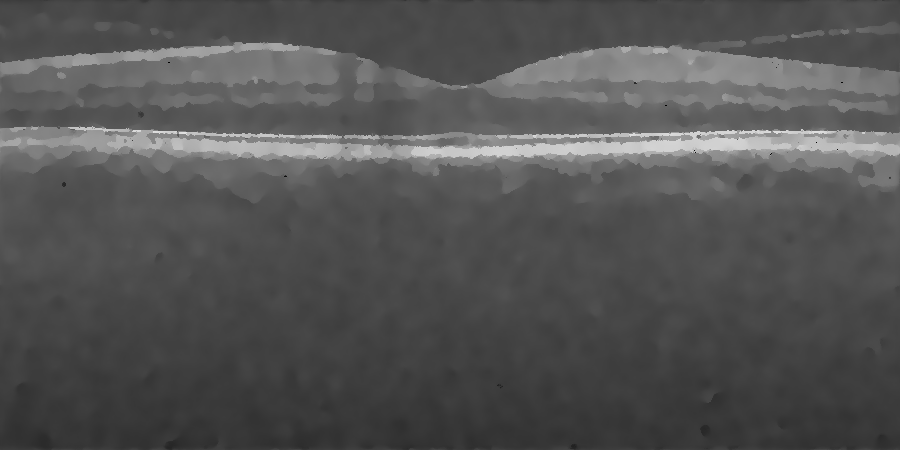} &
		\includegraphics[width=0.31\linewidth]{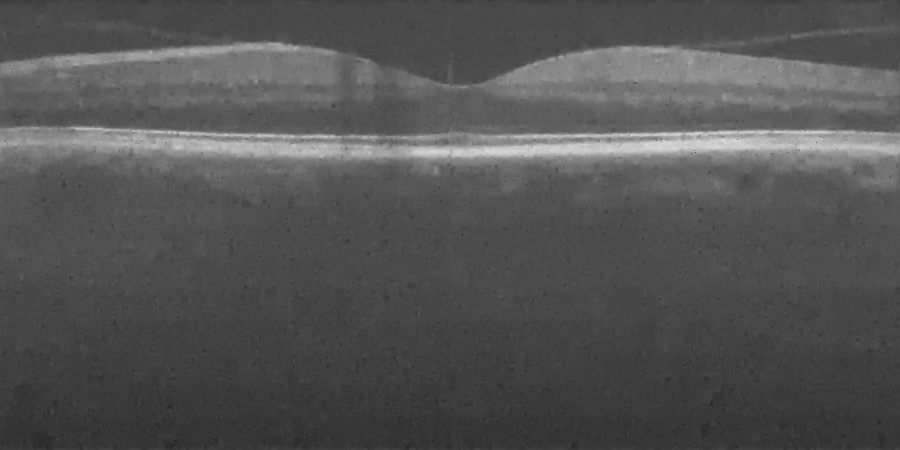} &
		\includegraphics[width=0.31\linewidth]{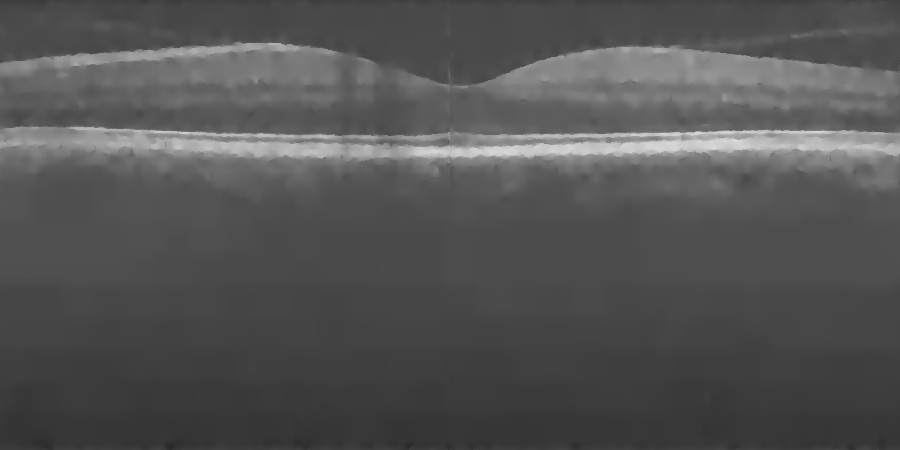} \\
		
		AWS (12.9, 2.5) & CRP (11.5, 2.5) & Wedgelet (11.1, 2.4) \\
		
		\includegraphics[width=0.31\linewidth]{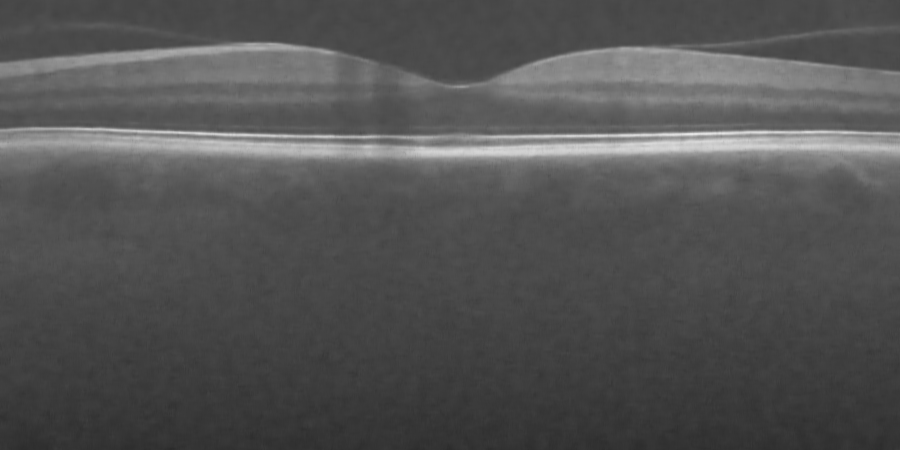} &
		\includegraphics[width=0.31\linewidth]{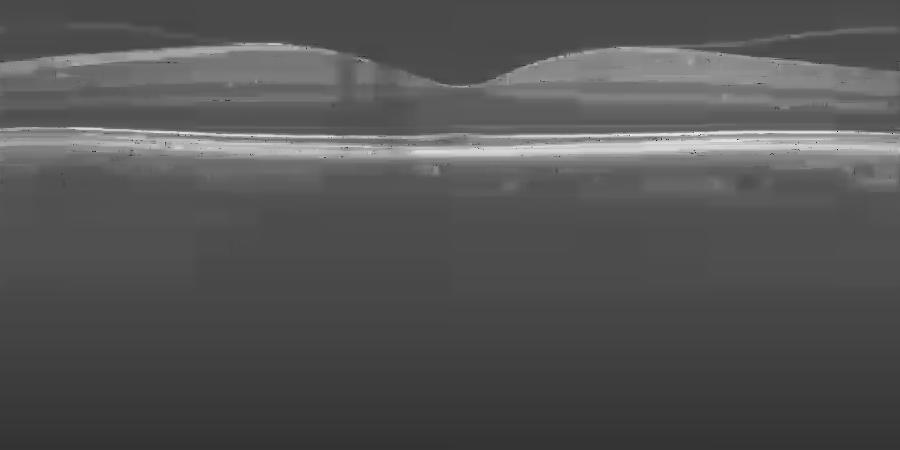} &
		\includegraphics[width=0.31\linewidth]{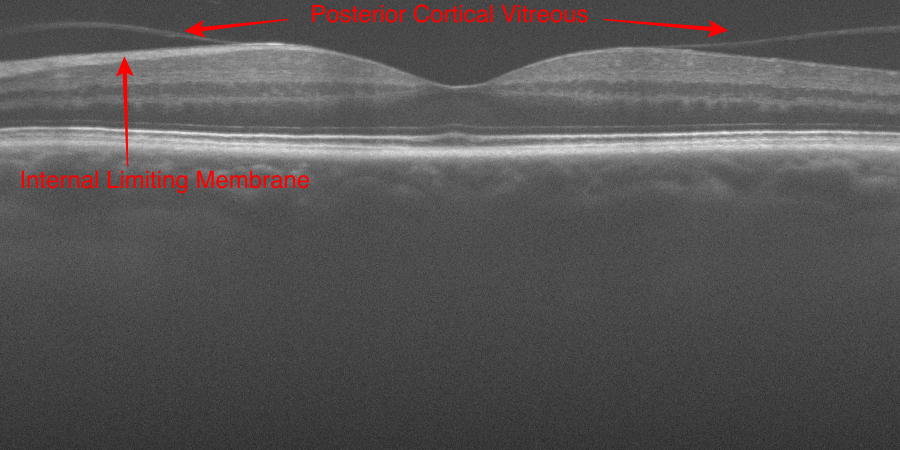} \\
		
		BPFA (11.7, 2.6) & WARP (10.2, 2.3) &  ``noiseless" 
	\end{tabular}
\end{figure*}

We further compare WARP with a study conducted in~\cite{Fang2017}, which considers another six method: BRFOE~\citep{weiss2007makes}, K-SVD~\citep{elad2006image}, PGPD~\citep{xu2015patch}, BM3D~\citep{dabov2007image}, MSBTD~\citep{fang2012sparsity}, and SSR~\citep{Fang2017}. These six methods have been applied to 18 foveal images from 18 subjects, using four slices nearby the original observation at various stages of their implementation. Although WARP does not require nearby information and can even process a 3D volume if such data exist, we apply WARP to the observation that averages the original observation and the four nearby slices only to make a fair comparison. In \ref{table:PSNR.18}, we adopt the mean of peak signal-to-noise ratio (PSNR) for all methods to align with \cite{Fang2017}, which is calculated as $-10 \log_{10}(\text{MSE})$ (noting that we rescale all observations and noiseless gray-scale images by 255). We can see that WARP gives the largest mean of PSNR, thus achieves excellent performances compared to a wide range of existing methods in this application setting. We choose the two subjects considered in~\ref{figure:OCT}, and plot the reconstructed images by WARP utilizing the nearby four slices in~\ref{figure:OCT.nearby}. It suggests that WARP even has an enhanced display compared to the ``noiseless" image, especially in the lower half of the image.

\begin{table}
	\caption{Mean PSNR for 18 foveal images reconstructed by BRFOE, K-SVD, PGPD, BM3D, MSBTD, SSR, and WARP. Results for the methods other than WARP are from~\cite{Fang2017}.}
	\label{table:PSNR.18} 
	\centering 
	\begin{tabular}{ccccccc}
		\toprule 
		BRFOE & K-SVD & PGPD & BM3D & MSBTD & SSR & WARP \\
		25.32 & 27.03 & 27.01 & 27.04 & 27.08 & 28.10 & 28.18 \\
		\bottomrule
	\end{tabular}
\end{table}

\begin{figure*}
	\caption{Reconstructed images using WARP based on the noisy observation and its four nearby slices. The two metrics following each method are the MSE ($\times 10^{-4})$ and MAE ($\times 10^{-2})$ respectively. The ``noiseless" reference is an registered and averaged image.}
	\label{figure:OCT.nearby}
	\begin{tabular}{c@{\hskip 0.1in}c}
		\includegraphics[width=0.45\linewidth]{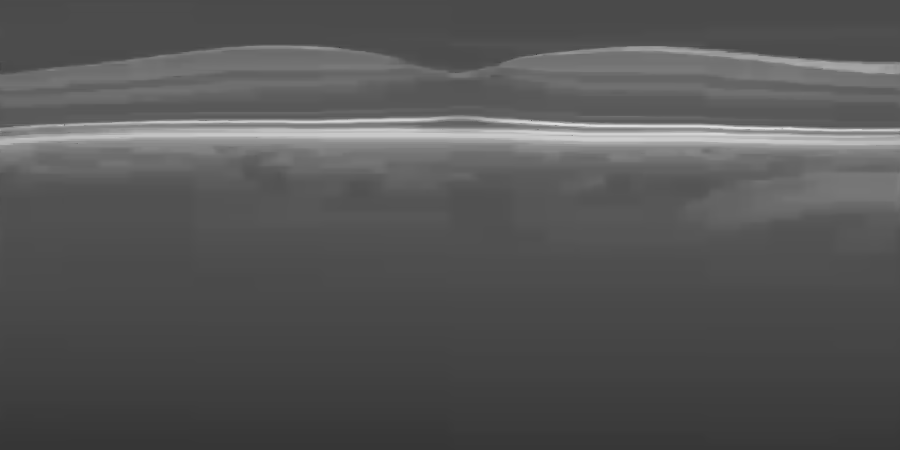} &
		\includegraphics[width=0.45\linewidth]{figure/OCT/\OCTalpha/average.png} \\
		WARP (6.6, 2.0) & ``noiseless" \\		
		\includegraphics[width=0.45\linewidth]{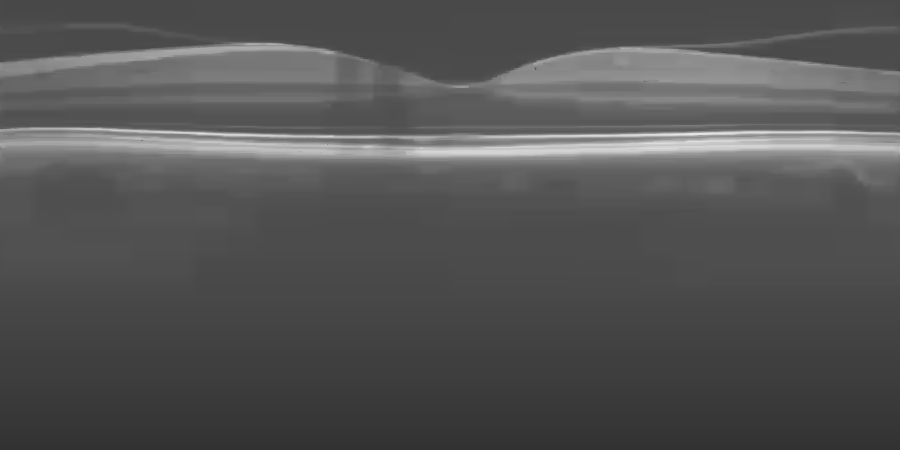} &
		\includegraphics[width=0.45\linewidth]{figure/OCT/\OCTbeta/average.png} \\
		WARP (10.0, 2.3) & ``noiseless"
	\end{tabular}
\end{figure*}

\section{Discussion\label{sec:dicussion}}
We have introduced the WARP framework that 
uses random recursive partitioning to induce a prior on the permutations of the index space, thereby 
achieving efficient inference on multi-dimensional functions
by converting it into a Bayesian model choice problem involving one-dimensional competitive generative models.
While our approach is Bayesian,  
one may consider other methods such as frequentist adaptive partitioning and shrinkage methods that incorporate the same idea. We do find satisfying the fully principled probabilistic inferential recipes that arise under our approach. 

The proposed framework WARP can be applied along with a wider range of Bayes wavelet regression models, including those that allow heterogeneous noise levels. If the error $\beps$ in Model~\eqref{eq:model} has general covariance matrix $\Sigma_{\epsilon}$, it often still makes sense to assume that the covariance of the error $\vect{u}$ in the wavelet domain, i.e. $W\Sigma_{\epsilon} W'$, is diagonal, due to the so-called whitening property of wavelet transforms discussed in~\cite{Johnstone1997}. In this case, let $\sigma_j^2 = \text{Var}(u_{j, k})$ for each $j$. Then one may estimate $\sigma_j^2$ using a robust estimator of the scale based on $\{w_{j, k}, 0 \leq k \leq 2^j - 1\}$ given a tree, for example, using the median absolute deviation of $\{w_{j, k}, 0 \leq k \leq 2^j - 1\}$ rescaled by 0.6745. Alternatively, one can adopt a hyperprior on location-based unknown variance $\sigma_j^2 \sim \text{IG}(\nu + 1, \nu \sigma_0^2)$, which is an inverse gamma prior with shape $\nu + 1$ and scale $\nu \sigma_0^2$ (thus the prior mean is $\sigma_0^2$). The hyperparameters $(\nu, \sigma_0^2)$ are either specified by users or estimated using data, for instance, we may estimate $\sigma_0^2$ by the median estimate based on the finest scale wavelet coefficients~\citep{donoho1995adapting} . 

Finally, while we introduce the WARP framework in the context of image denoising, we believe that the adaptive wavelet representation is applicable to a wide range of other tasks involving multi-dimensional signal processing.

\section*{Acknowledgments}
We are very grateful to an AE and three reviewers for providing extremely helpful comments and suggestions. We also thank Daniel Bourgeois for his help in porting our C++ code to R. Meng Li's research is partly supported by NSF grant DMS-2015569 and an ORAU Ralph E. Powe Junior Faculty Enhancement Award. Li Ma's research is partly supported by NSF grants DMS-1749789 and DMS-2013930. 

\section*{Supplementary Materials} 
Supplementary materials contain Proposition~\ref{lem:complexity.tree} and its proof; descriptions of WARP with local block shrinkage; details of the recursive message passing algorithm; proofs of all theorems;  
a sensitivity analysis for the proposed framework; plots of the 12 widely used test images used in Section~\ref{sec:DnCNN}; and comparison of WARP and selected methods using experiments of 3D image reconstruction. 

\bibliographystyle{IEEEtran}

\bibliography{WARP-abb}

\newpage 
\clearpage
\setcounter{page}{1}
\renewcommand*{\thepage}{S-\arabic{page}}
\begin{NoHyper}
	\begin{center} 
		\Large
		Supplementary Materials for ``\papertitle" 
	\end{center} 
	
	\setcounter{section}{0}
	\renewcommand{\thesection}{\Alph{section}} 
	\renewcommand{\thesubsection}{\thesection.\Alph{subsection}}
	
	Supplementary materials contain (A) Proposition~\ref{lem:complexity.tree} and its proof, (B) descriptions of WARP with local block shrinkage, (C) details of the recursive message passing algorithm, (D) proofs of all theorems, (E) a sensitivity analysis for the proposed framework, (F) plots of the 12 widely used test images used in Section~\ref{sec:DnCNN}, and (G) comparison of WARP and selected methods using experiments of 3D image reconstruction. 
	
	\section{Cardinality of the space of RDPs}
	\begin{prop}
		\label{lem:complexity.tree} 
		The log cardinality of the tree space induced by RDPs is $O(n)$ when $m = 2$. 
	\end{prop}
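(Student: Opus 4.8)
The plan is to count canonical RDPs by a recursion on block shapes and then observe that a single application of $\log$ converts the resulting doubly‑exponential count into a quantity linear in $n$. Every block that can arise in a canonical RDP is a rectangle whose side lengths are powers of two, and the collection of canonical RDPs of a block depends only on its shape; so let $T(a,b)$ denote the number of canonical RDPs of a $2^{a}\times 2^{b}$ rectangle. The cardinality in the statement is exactly $T(J_1,J_2)$, and since $m=2$ we have $n=2^{J_1+J_2}$. First I would record the recursion. When $a\ge 1$ and $b\ge 1$ the root block is divisible in both dimensions: splitting it at the midpoint of the first dimension produces two independent subproblems of shape $2^{a-1}\times 2^{b}$, and splitting at the midpoint of the second produces two of shape $2^{a}\times 2^{b-1}$, so $T(a,b)=T(a-1,b)^{2}+T(a,b-1)^{2}$, with boundary values $T(a,0)=T(0,b)=T(0,0)=1$ because a degenerate dimension leaves no choice at any split.

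Next I would collapse this two‑dimensional table to its anti‑diagonals. Put $M(k)=\max_{a+b=k}T(a,b)$. From the recursion, for any $a+b=k\ge 2$ one has $T(a,b)\le 2M(k-1)^{2}$, hence $M(k)\le 2M(k-1)^{2}$. Writing $\ell_k=\log_2 M(k)$, this becomes the linear recurrence $\ell_k\le 1+2\ell_{k-1}$ with $\ell_0=\ell_1=0$; rewriting it as $\ell_k+1\le 2(\ell_{k-1}+1)$ gives $\ell_k\le 2^{\,k-1}-1$. Taking $k=J_1+J_2$ yields $\log_2 T(J_1,J_2)\le \log_2 M(J_1+J_2)\le n/2-1=O(n)$, as claimed.

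The only point that needs care---rather than brute force---is the bookkeeping that keeps the logarithm linear: the squares in $T(a,b)=T(a-1,b)^{2}+T(a,b-1)^{2}$ are precisely what make the raw count doubly exponential in the tree depth, but passing to $\log$ replaces each square by a factor of $2$, and these factors compound over the $J_1+J_2$ levels to give $2^{J_1+J_2}=n$; reducing the full shape table to the single quantity $M(k)$ is what exhibits this cleanly without carrying around the entire two‑dimensional recursion. If desired, the matching lower bound $\log_2 T(J_1,J_2)=\Omega(n)$ follows immediately from $T(a,b)\ge T(a-1,b)^{2}$, so the $O(n)$ rate is tight; only the upper bound is needed here.
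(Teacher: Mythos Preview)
Your proof of the $O(n)$ upper bound is correct and, at its core, uses the same recursion $T(a,b)=T(a-1,b)^2+T(a,b-1)^2$ as the paper. Where you diverge is in how you control the growth: you collapse the two-dimensional table to the anti-diagonal maxima $M(k)=\max_{a+b=k}T(a,b)$, obtain the one-variable recurrence $M(k)\le 2M(k-1)^2$, and take logarithms to get $\ell_k\le 2^{k-1}-1$. The paper instead proves a two-sided bound $c(a,b)\in[\tfrac12 k_1^{2^{a+b}},\tfrac12 k_2^{2^{a+b}}]$ by first handling the boundary strip $a=1$ via the Aho--Sloane closed form for the number of strongly binary trees of bounded height, and then inducting on $(a,b)$. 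Your argument is more elementary---it needs no external reference and no two-dimensional induction---and is entirely adequate for the stated $O(n)$ claim; the paper's route yields the sharper information that $\log c(a,b)$ is sandwiched between $2^{a+b}\log k_1$ and $2^{a+b}\log k_2$ for explicit constants.

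One small caveat on your closing remark about the lower bound: the inequality $T(a,b)\ge T(a-1,b)^2$ alone bottoms out at $T(0,b)=1$, which gives nothing. To get $\Omega(n)$ you must first anchor at a nontrivial value, e.g.\ $T(1,1)=2$, and then iterate both directions: $T(1,b)\ge T(1,1)^{2^{b-1}}$ and $T(a,b)\ge T(1,b)^{2^{a-1}}$, yielding $\log_2 T(a,b)\ge 2^{a+b-2}$. This is easy but not quite ``immediate'' from the single inequality you wrote; since only the upper bound is needed for the proposition, this does not affect the validity of your proof.
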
 
	
	\begin{proof}[Proof of Proposition~\ref{lem:complexity.tree}]
		Let $c(a, b)$ be the cardinality of the tree space induced by RDPs for an $2^a$ by $2^b$ image. We can obtain the following recursive formula 
		\begin{equation}
		c(a, b) = 
		\begin{cases}
		c^2(a - 1, b) + c^2(a, b - 1), & \text{if } a, b \geq 1 \\
		1 & \text{if }  a = 0 \text{ or } b = 0. 
		\end{cases}
		\end{equation}
		We assert that there exist two constants $(k_1, k_2)$ such that $k_2 \geq k_1 > 1$ and 
		\begin{equation}
		c(a, b) \in \left[\frac{1}{2} k_1^{2^{a + b}}, \frac{1}{2}k_2^{2^{a + b}}\right], 
		\end{equation}
		for any $a \geq 1$ and $b \geq 1$. 
		
		First consider $a = 1$ and $b \geq 1$. We have $
		c(1, b) = c^2(1, b - 1) + 1
		$
		when $b \geq 1$ and $c(1, 0) = 1$ when $b = 0$. The quantity $c(1, b)$ is actually the number of ``strongly" binary trees of height $\leq b$, which possesses an analytical form 
		\begin{equation}
		c(1, b) = \lfloor k^{2^b} \rfloor, 
		\end{equation}
		according to \cite{Aho+Slo:73}, where 
		\begin{equation}
		k = \exp\left\{\sum_{j = 0}^\infty 2^{-j - 1} \log(1 + c^{-2}(1, j))\right\} \approx 1.503. 
		\end{equation}
		Letting $k_1 = \sqrt{k}$ and $k_2 = k$ and noting $k^{2^b} \geq 2$ for all $b \geq 1$, we  obtain that
		\begin{equation}
		\frac{1}{2} k_1^{2^{1 + b}} = \frac{1}{2} k^{2^{b}}  \leq  k^{2^{b}}  - 1 \leq \lfloor k^{2^b} \rfloor \leq k^{2^{b}}  \leq \frac{1}{2}k^{2^{1 + b}}, 
		\end{equation}
		for all $b \geq 1$. Therefore, the assertion holds for all $a = 1$ and $b \geq 1$.  Since $c(a, b) = c(b, a)$, the assertion also holds for all $a \geq 1$ and $b = 1$. 
		
		For any $a \geq 1$ and $b \geq 1$, it is easy to verify that if the assertion holds for $(a, b - 1)$ and $(a - 1, b)$, then it holds for $(a, b)$.  We then complete the proof by  induction. 
	\end{proof} 
	
	\section{WARP with local block shrinkage\label{sec:RRDP.OP}}
	
	Traditional wavelet analysis is done 
	by fixing the maximum depth of the wavelet tree at $J$. That is, one partitions the index space all the way down to the finest level of ``atomic'' blocks. In most practical problems, once the blocks are small enough, the function value within the block becomes essentially constant with respect to the noise level, and so further division within such homogeneous blocks will be wasteful and will reduce statistical efficiency. For example, in~\ref{fig:RDP} the partition in the upper left block (Level 3) along with its descendants is not necessary.
	Thus it is often desirable to incorporate adaptivity in the depth of the wavelet tree and allow it to be terminated earlier than reaching level $J$. In practice the optimal maximum depth varies across $\om$. 
	For example, some parts of an image may contain many interesting details, while the rest do not---e.g., an image of a painting hung on a gray wall. A high resolution will be needed to capture the details in the painting, but would be unnecessary and introduce additional variability in the estimation for the wall. 
	
	This consideration is closely related to the idea of adaptive block shrinkage \citep{Cai1999} in the frequentist wavelet regression analysis. Once there is little evidence for any interesting structure within a subset of the index space, then the function value within that subset can be shrunk to a constant. That is, the wavelet tree is ``pruned'' there. Next we show that such pruning can be achieved in a hierarchical modeling manner, and the resulting Bayesian wavelet regression model is again compatible with our WARP framework.

	To achieve such pruning, we introduce another set of latent variables $\R=\{R_{j,k}: j=0,1,\ldots, J-1, k=0,1,\ldots,2^{j}-1\}$, where $R_{j,k}=1$ indicates that the tree is pruned at node $(j,k)$. Next we describe a generative prior on $\R$ that will blend well with the WARP framework. To start, let $
	R_{0,0} \ind {\rm Bern}(\eta_{0,0})$
	and for all $j\geq 1$, and
	\[
	R_{j,k}\,|\,R_{j-1,\lfloor k/2 \rfloor} \ind \begin{cases}
		{\rm Bern}(\eta_{j,k}) & \text{if $R_{j-1,\lfloor k/2 \rfloor}=0$}\\
		{\rm Bern}(1) & \text{if $R_{j-1,\lfloor k/2 \rfloor}=1$}.
	\end{cases}
	\]
	That is, if a node's parent has been pruned, then its children are also pruned by construction.
	From now on, we shall refer to this prior model on $\R$ as an {\em optional pruning} (OP) model \citep{ma:2013}, which is specified by a set of {\em pruning probabilities} $\eta_{j,k}\in [0,1]$. We write $\R\sim {\rm OP}(\bm{\eta})$. 
	
	Given $\R$, we can modify our prior on $\S$ to reflect the effect of pruning. For example, instead of an independent prior on $S$, we can now generate them as follows
	\[
	S_{j,k}\,|\,\R \ind \begin{cases}
		{\rm Bern}(\rho_{j, k}) & \text{if $R_{j,k}=0$}\\
		{\rm Bern}(0) & \text{if $R_{j,k}=1$.}
	\end{cases}
	\]
	That is, if the node has not been pruned, then we generate $S_{j,k}$ from the independent Bernoulli as in the standard spike-and-slab setup, but if the node has been pruned, then by construction, we must have $S_{j,k}=0$ due to pruning.
	
	It is often reasonable to specify the prior shrinkage and pruning probabilities as functions of the level in the RDP. That is, $\rho_{j, k} = \rho_j$ and $\eta_{j, k} = \eta_j$ for all $k$. In the node-specific notation, $\rho(A)=\rho_j$ and $\eta(A)=\eta_j$ for all $j$th node $A\in \A$. In this case, one can show that this joint model on $(\S,\R)$ is equivalent to a Markov tree model with three states defined in terms of the combinations of $(S_{j,k},R_{j,k})=(1,0)$ , (0,0), or (0,1), and with the corresponding transition matrix for $S_{j,k}$ given by
	\[
	\brho_j = 
	\begin{bmatrix}
		\rho_j (1-\eta_j) & (1-\rho_j)(1-\eta_j) & \eta_j\\
		\rho_j (1-\eta_j) & (1-\rho_j)(1-\eta_j) & \eta_j\\
		0 & 0 & 1
	\end{bmatrix}.
	\] 
	This allows us to derive the posterior from Theorem~\ref{thm:latent_state}, and carry out inference accordingly. Specifically, for each $A \in \A$, let $p_0(A)$ be the marginal likelihood contributed from the wavelet coefficients in $A$ and its descendants if $A$ is pruned, i.e., 
	$$p_0(A)=\frac{1}{(\sqrt{2\pi\sigma^2})^{|A|-1}} \exp\left\{-\frac{\sum_{x\in A}(y(x)-\bar{y}(A))^2}{2\sigma^2}\right\}, $$ where $\bar{y}{(A)} = \sum_{x \in A} y(x)/|A|$. If $A \in \T$, the following maps are directly available from Theorem~\ref{thm:latent_state}:
	\begin{itemize}
		\item The marginal likelihood contribution from the data within node $A$ if $A$ is divided in dimension $d$: 
		$$M_{d}(A)=\rho(A) M_{d}^{(1)}(A) + (1-\rho(A)) M_{d}^{(0)}(A);$$ 
		\item The posterior spike probability $\tilde{\rho}_{d}$ of $A$ if $A$ is divided in dimension $d$:  $$\tilde{\rho}_{d}(A)=\rho(A)M^{(1)}_{d}(A)/M_{d}(A);$$ 
		\item The marginal likelihood from data on $A$ and its descendants: $\Psi(A)=(1-\eta(A))\sum_{d\in \D(A)} \lambda_d(A) M_{d}(A) \Psi(A_{l}^{(d)})\Psi(A_{r}^{(d)}) + \eta(A) p_{0}(A)$ \text{if $A$ is non-atomic;} $\Psi(A)=1$ \text{if $A$ is atomic.} 
		\item The posterior probability of pruning $A$: $$\tilde{\eta}(A)=\eta(A)p_{0}(A)/\Psi(A);$$ 
		\item  The posterior probability for $A$ to be divided in dimension $d$ given $A$ is not pruned: 
		$$\tilde{\lambda}_{d}(A)=(1 - \eta(A))\frac{\lambda_{d}(A) M_{d}(A)\Psi(A^{(d)}_{l})\Psi(A^{(d)}_{r})}{\Psi(A)-\eta(A)p_{0}(A)}.$$
	\end{itemize}  
	
	\section{Recursive message passing algorithm} \label{sec:message.passing} 
	For the Haar basis, the posterior mean ${\rm E}(\vect{f}|\by)$ can be evaluated analytically through recursive message passing without any Monte Carlo sampling for Bayesian wavelet regression models that adopt the spike-and-slab setup along with optional pruning of the wavelet tree, which contains the models without optional pruning as special cases with zero pruning probabilities. We describe the strategy next and will use it to compute ${\rm E}(\vect{f}|\by)$ in our numerical examples. 
	
	For each $A\in \A$, let $c(A)$ be the scale (father wavelet) coefficient on $A$ if $A\in\T$, and let $\varphi(A)= {\rm E}(c(A) \I_{\{A\in\T\}} \,|\, \by)$.  Note that ${\rm E}(\vect{f}\,|\,\by)$ is given by $\varphi(A)$ for all atomic $A$.  To compute the mapping $\varphi$, we introduce two auxiliary mappings
	$\psi_0(A) = {\rm P}(A\in\T,R(A)=0\,|\,\by)$ and $\varphi_0(A) = {\rm E}(c(A) \I_{\{A\in\T, R(A)=0\}} \,|\, \by). $
	Let $\bar{A}^{(d)}$ denote the parent of $A$ in $\T$ if $A$ is a child node after dividing its parent in the $d$th dimension, and let $\mathcal{P}(A)\subset\{1,2,\ldots,m\} $ be the collection of dimensions of $A$ that do not have full support $[0,2^{J_i} - 1]$, i.e., those that have been partitioned at least once in previous levels.
uting the tri-variate mapping $(\phi_0, \varphi_0, \varphi):\A\rightarrow \real^3$.
	
	\begin{thm}
		\label{thm:recursive.map}
		To initiate the recursion, for $A = \Omega$, we let
		$\psi_0(A) = 1 - \tilde{\eta}(A), \varphi_0(A) = (1 - \tilde{\eta}(A)) |A|/\sqrt{n}$, and $\varphi(A) =  |A|/\sqrt{n}$.
		Suppose we have evaluated these mappings up to level $j-1$, for level $j = 1, \ldots, J$, we have
		\begin{align}
			\psi_0(A) & = \sum_{d\in\mathcal{P}(A)} \psi_0(\bar{A}^{(d)}) \tilde{\lambda}_{d}(\bar{A}^{(d)})(1-\tilde{\eta}(A)); \\
			\varphi_0(A) & = (1 - \tilde{\eta}(A)) \cdot \sum_{d\in\mathcal{P}(A)} \frac{ \tilde{\lambda}_{d}(\bar{A}^{(d)})  }{\sqrt{2}}\bigg[\varphi_0(\bar{A}^{(d)}) -   \\
			& \hspace*{-0.2in}\tilde{\rho}_{d}(\bar{A}^{(d)}) \mu_1(w_{d}(\bar{A}^{(d)})) \psi_0(\bar{A}^{(d)})(-1)^{\I(\text{$A$ is the left child of $\bar{A}^{(d)}$})} \bigg];  \\
			\varphi(A) & = \frac{\varphi_0(A)}{1 - \tilde{\eta}(A)} \hspace*{-0.03in}+\hspace*{-0.03in} \frac{1}{\sqrt{2}} \hspace*{-0.06in}\sum_{ d\in\mathcal{P}(A)}\hspace*{-0.1in} \{\varphi(\bar{A}^{(d)}) - \varphi_0(\bar{A}^{(d)}) \}\lambda_{d}(\bar{A}^{(d)}).
		\end{align}
	\end{thm}
	\noindent Remark: Note that this recursion is top-down (from low to high resolutions), whereas that for computing $\Phi$ is bottom-up (from high to low resolutions). The two-directional recursion shares the spirit of the forward-backward algorithm for HMMs.
	
	Once we have computed the mapping $(\varphi_0,\psi_0,\varphi): \A\rightarrow \real^3$, the posterior mean ${\rm E}(\vect{f} \,|\,\by)$ is then given by $\varphi$ applied on the atomic nodes. Note that this theorem applies to the special case with no pruning as well.

	\section{Proofs of Theorems}
	\label{sec:proofs} 
	
	\begin{proof}[Proof of Theorem~\ref{thm:independent_shrinkage}]
		Because Theorem~\ref{thm:independent_shrinkage} can be considered a special case with a single latent state, its proof follows immediately from the latter theorem, which we prove below.
	\end{proof}
	
	\begin{proof}[Proof of Theorem~\ref{thm:latent_state}]
		First we verify that the mapping $\Phi_{s}(A)$ is the marginal likelihood contributed from data with locations in $A$, given that $A\in \T$ and that the latent state variable associated with the parent of $A$ in $\T$ is $s$. We show this by induction. First note that if $A$ is atomic, then
		\[
		\Phi_s(A) = {\rm P}(\by(A) \,|\,A\in\T,S(A_p)=s)=1
		\]
		by design as there are no wavelet coefficients associated with atomic nodes. Now, suppose we have shown that $\Phi_s(A)= {\rm P}(\by(A) \,|\,A\in\T,S(A_p)=s)$ for all $A$ with level higher than $j$. Then if $A$ is of level $j$, it follows that 
		\begin{align}
		&{\rm P}(\by(A) \,|\,A\in\T,S(A_p)=s)\\
		=& \sum_{s'}\sum_{d}{\rm P}(\by(A) \,|\,A\in \T,S(A)=s',S(A_p)=s,D(A)=d) \\
		&\hspace{5em} \times {\rm P}(S(A)=s'\,|\,A\in\T,S(A_p)=s) \\
		& \hspace{5em}\times{\rm P}(D(A)=d\,|\,A\in\T,S(A_p)=s)\\
		=& \sum_{s'}\rho_j(s,s')\sum_{d \in \mathcal{D}(A)} \lambda_{d} M^{(s')}_d(A) \Psi_{s'}(A_l^{d}) \Psi_{s'}(A_{r}^{(d)}),
		\end{align}
		which leads to the definition of $\Phi_s(A)$ in Theorem~\ref{thm:latent_state}.  
		
		Next let us derive the joint marginal posterior of $(\T,\S)$. Note that
		\begin{align}
		& {\rm P}(S_{j,k}=s'\,|\,S_{j-1,\lfloor k/2 \rfloor} = s,\T^{(j)},\by) \\ = & \frac{{\rm P}(S_{j,k}=s',S_{j-1,\lfloor k/2 \rfloor} = s,\by(A)\,|\,\T^{(j)})}{{\rm P}(S_{j-1,\lfloor k/2 \rfloor} = s,\by(A)\,|\,\T^{(j)})}.
		\end{align}
		Now we have 
		\begin{align}
		& {\rm P}(S_{j,k}=s',D_{j,k}=d,\by(A)\,|\,\T^{(j)},S_{j-1,\lfloor k/2 \rfloor} = s) \\ = & \rho_{j}(s,s') \lambda_{d}(A) M_{d}^{(s')}(A) \Phi_{s'}(A^{(d)}_{l})\Phi_{s'}(A^{(d)}_{r}), 
		\end{align}
		which leads to 
		\begin{align}
		& {\rm P}(S_{j,k}=s',\by(A)\,|\,\T^{(j)},S_{j-1,\lfloor k/2 \rfloor} = s) \\ =&\rho_{j}(s,s') \sum_{d}\lambda_{d}(A) M_{d}^{(s')}(A) \Phi_{s'}(A^{(d)}_{l})\Phi_{s'}(A^{(d)}_{r})
		\end{align}
		and furthermore, 
		\begin{align}
		& {\rm P}(S_{j,k}=s'\,|\,S_{j-1,\lfloor k/2 \rfloor} = s,\T^{(j)},\by) \\ = & \frac{\rho_{j}(s,s') \sum_{d}\lambda_{d}(A) M_{d}^{(s')}(A) \Phi_{s'}(A^{(d)}_{l})\Phi_{s'}(A^{(d)}_{r})}{\sum_{s''}\rho_{j}(s,s'') \sum_{d}\lambda_{d}(A) M_{d}^{(s'')}(A) \Phi_{s''}(A^{(d)}_{l})\Phi_{s''}(A^{(d)}_{r})}, 
		\end{align}
		where the denominator is just $\Phi_s(A)$.
		
		Finally,
		\begin{align}
		& {\rm P}(D_{j,k}=d\,|\,S_{j,k}=s',\T^{(j)},\by) \\ = & \frac{{\rm P}(D_{j,k}=d,\by(A)\,|\,S_{j,k}=s',\T^{(j)})}{{\rm P}(\by(A)\,|\,S_{j,k}=s',\T^{(j)})}\\
		= & \frac{\lambda_{d}(A) M_{d}^{(s')}(A) \Phi_{s'}(A^{(d)}_{l})\Phi_{s'}(A^{(d)}_{r})}{\sum_{d'} \lambda_{d'}(A) M_{d'}^{(s')}(A) \Phi_{s'}(A^{(d')}_{l})\Phi_{s'}(A^{(d)}_{r})}.
		\end{align}
		This completes the proof.
	\end{proof}

	\begin{proof}[Proof of Theorem~\ref{thm:recursive.map}]
		We first obtain the recursive recipe for computing the maps $(\psi_0, \varphi_0)$ following Theorem~\ref{thm:independent_shrinkage}:
		\begin{align}
		& \psi_0(A) \\ = & \sum_{d\in \mathcal{P}(A)} {\rm P}(\bar{A}^{(d)}\in\T,R(\bar{A}^{(d)})=0\,|\,\by) \tilde{\lambda}_{d}(\bar{A}^{(d)})(1-\tilde{\eta}(A)) \\
		 = & \sum_{d\in\mathcal{P}(A)} \psi_0(\bar{A}^{(d)}) \tilde{\lambda}_{d}(\bar{A}^{(d)})(1-\tilde{\eta}(A)),
		\end{align}
		and
		\begin{align}
		& \varphi_0(A) = {\rm E}\left(c(A)\I_{\{A\in\T,R(A)=0\}}\,|\,\by\right) \\
		=  & \sum_{ d\in\mathcal{P}(A)} {\rm E}\left(c(A)\I_{\{\bar{A}^{(d)}\in\T,D(\bar{A}^{(d)})=d, R(A) = 0\}}\,|\,\by\right) \\
		=&\sum_{d\in\mathcal{P}(A)}{\rm E}\left(c(A)\,|\,\bar{A}^{(d)}\in\T,D(\bar{A}^{(d)})=d,R(A)=0,\by  \right)\\
		& \times {\rm P}\left( \bar{A}^{(d)}\in\T,D(\bar{A}^{(d)})=d,R(A)=0\,|\,\by\right) \\
		=& (1 - \tilde{\eta}(A)) \sum_{d\in\mathcal{P}(A)} \frac{\tilde{\lambda}_{d}(\bar{A}^{(d)}) }{\sqrt{2}}\bigg[\varphi_0(\bar{A}^{(d)}) - \\ & \hspace*{-0.5em}{\tilde{\rho}_{d}(\bar{A}^{(d)}) \mu_1(w_{d}(\bar{A}^{(d)}))\psi_0(\bar{A}^{(d)})}\cdot (-1)^{\I(\text{$A$ is the left child of $\bar{A}^{(d)}$})}\bigg]. \\
		\label{eq:varphi.0}
		\end{align}
		
		We next derive the recursive formula for $\varphi(A)$. Let $\varphi_1(A) = {\rm E}(c(A) \I_{\{A\in\T, R(A)=1\}} \,|\, \by)$, then we have $\varphi(A) = {\rm E}(c(A) \I_{\{A\in\T\}} \,|\, \by) = \varphi_0(A) + \varphi_1(A) $.
		Note that
		\begin{equation}
		\label{eq:dumm3}
		\varphi(A) = \sum_{ d\in\mathcal{P}(A)} {\rm E}\left(c(A)\I_{\{\bar{A}^{(d)}\in\T,D(\bar{A}^{(d)})=d\}}\,|\,\by\right),
		\end{equation}
		and for each $ d\in\mathcal{P}(A)$, we have
		\begin{align}
		& {\rm E}\left(c(A)\I_{\{\bar{A}^{(d)}\in\T,D(\bar{A}^{(d)})=d\}}\,|\,\by\right) \\
		= & \sum_{r = 0, 1}{\rm E}\left(c(A)\I_{\{\bar{A}^{(d)}\in\T,D(\bar{A}^{(d)})=d, R(\bar{A}^{(d)}) = r\}}\,|\,\by\right) \label{eq:dummy2}\\
		=&  \sum_{r = 0, 1}
		{\rm E}\left(c(A) \,|\, \bar{A}^{(d)}\in\T,D(\bar{A}^{(d)})=d, R(\bar{A}^{(d)}) = r, \by\right) \\
		& \qquad   
		\times {\rm P}(\bar{A}^{(d)}\in\T,D(\bar{A}^{(d)})=d, R(\bar{A}^{(d)}) = r \,|\, \by). \label{eq:dummy1}
		\end{align}
		For the second term in~\eqref{eq:dummy1}, we have
		\begin{align}
		& {\rm P}(\bar{A}^{(d)}\in\T,D(\bar{A}^{(d)})=d, R(\bar{A}^{(d)}) = r \,|\, \by) \\
		=&{\rm P}(D(\bar{A}^{(d)})=d \,|\, \bar{A}^{(d)}\in\T, R(\bar{A}^{(d)}) = r, \by) \\
		& \hspace{3em} \times {\rm P}(\bar{A}^{(d)}\in\T, R(\bar{A}^{(d)}) = r \,|\, \by) \\
		=& \tilde{\lambda}_{d}(\bar{A}^{(d)})^{1-r} \lambda_{d}(\bar{A}^{(d)})^{r} \psi_{r}(\bar{A}^{(d)}). 
		\end{align}
		For the first term in~\eqref{eq:dummy1},  it is easy to check that
		\begin{align}
		&{\rm E}\left(c(A)\,|\,\bar{A}^{(d)}\in\T,D(\bar{A}^{(d)})=d,R(\bar{A}^{(d)})=r,\by  \right)\\
		=&\begin{cases}
		\frac{1}{\sqrt{2}}\bigg[\frac{\varphi_0(\bar{A}^{(d)})}{\psi_0(\bar{A}^{(d)})} -  \tilde{\rho}_{d}(\bar{A}^{(d)}) \mu_1(w_{d}(\bar{A}^{(d)})) \\ \hspace{5em} \times (-1)^{\I(\text{$A$ is the left child of $\bar{A}^{(d)}$})}\bigg] & \text{if $r=0$}\\
		\frac{1}{\sqrt{2}} \varphi_1(\bar{A}^{(d)})/\psi_1(\bar{A}^{(d)})& \text{if $r=1$},
		\end{cases}
		\end{align}
		where we use the independence between $c(A)$ and $D(A)$ given $A \in \T$.
		Plugging the two terms into~\eqref{eq:dummy1} , we obtain that
		\begin{align}
		& {\rm E}\left(c(A)\I_{\{\bar{A}^{(d)}\in\T,D(\bar{A}^{(d)})=d\}}\,|\,\by\right)\\
		=&\frac{1}{\sqrt{2}}\bigg[\varphi_0(\bar{A}^{(d)})  - \tilde{\rho}_{d}(\bar{A}^{(d)}) w_{d}(\bar{A}^{(d)})/(1+\tau_{j-1}^{-1}) \\ & \hspace{3em} \times (-1)^{\I(\text{$A$ is the left child of $\bar{A}^{(d)}$})}\cdot \psi_{0}(\bar{A}^{(d)})\bigg] \tilde{\lambda}_{d}(\bar{A}^{(d)}) \\
		& + \frac{1}{\sqrt{2}} \varphi_1(\bar{A}^{(d)}) \lambda_{d}(\bar{A}^{(d)}). \label{eq:dummy4}
		\end{align}
		
		Combining the result in~\eqref{eq:dumm3} and~\eqref{eq:dummy4}, and comparing it with $\varphi_0(A)$ in~\eqref{eq:varphi.0}, we obtain that
		\begin{equation}
		\varphi(A) = \varphi_0(A)/(1 - \tilde{\eta}(A)) + \frac{1}{\sqrt{2}} \sum_{ d\in\mathcal{P}(A)} \varphi_1(\bar{A}^{(d)}) \lambda_{d}(\bar{A}^{(d)}),
		\end{equation}
		which concludes the proof by plugging in $\varphi_1(\cdot) = \varphi(\cdot) - \varphi_0(\cdot)$.
	\end{proof}

	\section{Sensitivity analysis\label{sec:sensitivity.gamma} } 
	In this section, we conduct a sensitivity analysis for the proposed WARP framework at various choices of hyperparameters. 
	
	We first implement the method of ``WARP-full" which chooses $\bphi$ by a full optimization of the marginal likelihood using two simulated images $(f_1, f_2)$ explicitly given in~\ref{sec:experiment.3D}. Recall that the row of WARP selects $\bphi$ at a limited set of grid points. ~\ref{table:sensitivity.3D} shows that the MSEs of WARP-full are almost identical to the row of WARP. This observation is consistent across many scenarios we have tested. Therefore, the method of WARP seems robust in terms of hyperparameters, and we shall recommend a maximization over a small set of grid points as default. In addition, we investigate the performances of WARP at various choices of $\gamma$ in $\mathbb{B}_0$ including Laplace and quasi-Cauchy priors. We find out these $\mathbb{B}_0$ lead to almost exactly the same MSEs as normal priors (results not shown here). 
	
	We further investigate the sensitivity of WARP by considering the following ways to select hyperparameters $\tau$ and $\eta$: 
	\begin{itemize}
		\item $\tau$: ``function" (we use $\tau_j = 2^{-\alpha j} \tau_0$ as in~\ref{sec:experiments}); ``mix" (we use separate $\tau_j$ only for the last two levels and a constant for other levels, therefore we have three free parameters for $\tau$); ``full" (we use separate $\tau_j$'s for all levels $j$ ) 
		
		\item 
		$\eta$:  ``constant" (we use $\eta(A) = \eta_0$ for all $A$ as in~\ref{sec:experiments});``mix" (we use $\eta_j$ for the last two levels and a constant for other levels, therefore we have three free parameters for $\eta$); ``full" (we use separate $\eta_j$'s for all levels $j$). 
	\end{itemize} 
	\ref{table:more.sensitivity} shows that the MSEs only exhibit minimal differences across various combinations of tuning approaches. This confirms the previous findings that the proposed framework is not sensitive to hyperparameters.

	\begin{table*}[!t]
		\caption{\label{table:sensitivity.3D}Average MSEs ($\times 10^{-2}$) of WARP-full and WARP based on 100 replications under the setting of~\ref{table:3D}. 
		}
		\centering
		\vspace*{-1em}
		\begin{tabular}{lcccccccc}
			\toprule
			& \multicolumn{4}{c}{$n$ = 64} & \multicolumn{4}{c}{$n$ = 128}\\ \cmidrule(lr){2-5} \cmidrule(lr){6-9}
			Method & \multicolumn{2}{c}{$f = f_1$}& \multicolumn{2}{c}{$f = f_2$}  & \multicolumn{2}{c}{$f = f_1$}& \multicolumn{2}{c}{$f = f_2$} \\ \cmidrule(lr){1-1}\cmidrule(lr){2-3} \cmidrule(lr){4-5} \cmidrule(lr){6-7} \cmidrule(lr){8-9}
			& $\sigma = 0.1$ & $ 0.2$ & $\sigma = 0.1$ & $ 0.2$  &$\sigma = 0.1$ & $ 0.2$ & $\sigma = 0.1$ & $ 0.2$ \\
			{WARP-full}&{0.02}&{0.04}&{0.04}&0.12&{0.01}&{0.02}&{0.02}&{0.05}\\
			{WARP}&{0.02}&{0.04}&{0.04}&{0.11}&{0.01}&{0.02}&{0.02}&{0.05}\\
			\bottomrule
		\end{tabular}
	\end{table*} 
	
	\begin{table}[!t]
		\caption{\label{table:more.sensitivity}Sensitivity analysis of WARP when hyperparameters are selected differently using the Shepp-Logan phantom test image $(256\times 256)$ in Matlab at various $\sigma$. The average MSEs ($\times 10^{-2})$ are reported based on 5 replications.}
		\centering
		\begin{tabular}{llHHHHHHHlHlHlHl}
			\toprule 
			$\tau$ & $\eta$ & \multicolumn{7}{c}{}    &                                                                             
			0.1 & 0.2 & 0.3 & 0.4 & 0.5 & 0.6 & 0.7 \\  \cmidrule(lr){1-1} \cmidrule(lr){2-2}  \cmidrule(lr){9-16}
			function & constant &    0.07 &    0.13 &    0.18 &    0.24 &    0.28 &    0.34 &    0.37 &    0.03 &    0.13 &    0.27 &    0.42 &    0.57 &    0.72 &    0.89 \\ 
			function & mix &    0.07 &    0.13 &    0.18 &    0.24 &    0.28 &    0.33 &    0.37 &    0.03 &    0.13 &    0.27 &    0.42 &    0.58 &    0.73 &    0.88 \\ 
			function & full &    0.07 &    0.13 &    0.19 &    0.24 &    0.28 &    0.33 &    0.37 &    0.03 &    0.13 &    0.27 &    0.43 &    0.57 &    0.72 &    0.87 \\ 
			mix & constant &    0.07 &    0.14 &    0.20 &    0.25 &    0.30 &    0.35 &    0.40 &    0.03 &    0.13 &    0.26 &    0.41 &    0.57 &    0.72 &    0.94 \\ 
			mix & mix &    0.07 &    0.14 &    0.20 &    0.25 &    0.30 &    0.35 &    0.40 &    0.03 &    0.13 &    0.26 &    0.43 &    0.57 &    0.72 &    0.91 \\ 
			mix & full &    0.07 &    0.14 &    0.20 &    0.25 &    0.30 &    0.34 &    0.40 &    0.03 &    0.12 &    0.27 &    0.42 &    0.57 &    0.75 &    0.91 \\ 
			full & constant &    0.07 &    0.13 &    0.18 &    0.24 &    0.28 &    0.33 &    0.38 &    0.03 &    0.13 &    0.27 &    0.42 &    0.58 &    0.72 &    0.86 \\ 
			full & mix &    0.07 &    0.13 &    0.18 &    0.23 &    0.29 &    0.34 &    0.38 &    0.03 &    0.12 &    0.27 &    0.43 &    0.56 &    0.72 &    0.87 \\ 
			full & full &    0.07 &    0.13 &    0.18 &    0.23 &    0.28 &    0.33 &    0.37 &    0.03 &    0.13 &    0.27 &    0.42 &    0.57 &    0.72 &    0.88 \\ 
			\bottomrule
		\end{tabular}
	\end{table}

	\section{12 widely used test images}
	The 12 widely used test images used in Section~\ref{sec:DnCNN} are ploted in \ref{fig:set12}. 
	\setlength{\tabcolsep}{1pt}
	\begin{figure}
		\centering
		\begin{tabular}{cccccc}
			\includegraphics[width = 0.15\linewidth]{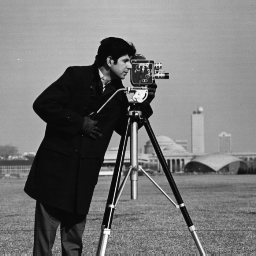}& \includegraphics[width=0.15\linewidth]{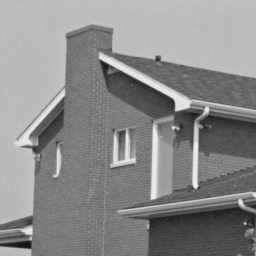} & \includegraphics[width=0.15\linewidth]{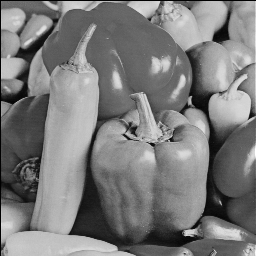} & \includegraphics[width=0.15\linewidth]{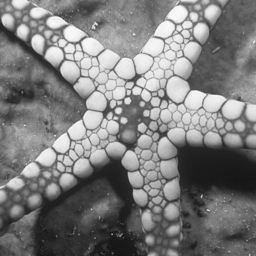} & \includegraphics[width=0.15\linewidth]{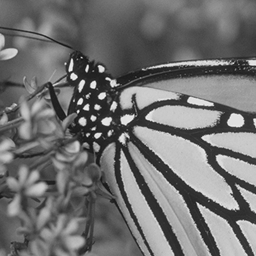} & \includegraphics[width=0.15\linewidth]{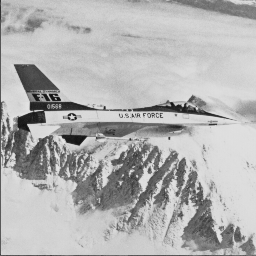}  \\
			\includegraphics[width = 0.15\linewidth]{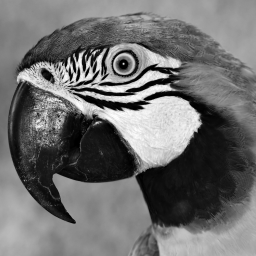}& \includegraphics[width=0.15\linewidth]{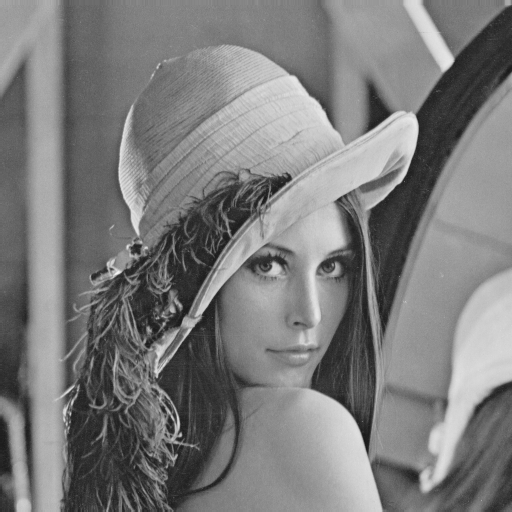} & \includegraphics[width=0.15\linewidth]{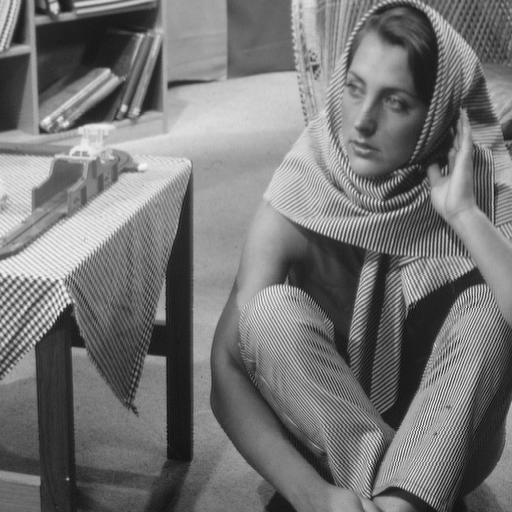} & \includegraphics[width=0.15\linewidth]{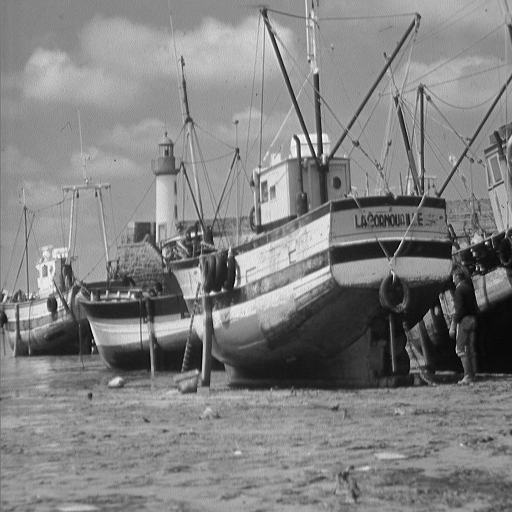} & \includegraphics[width=0.15\linewidth]{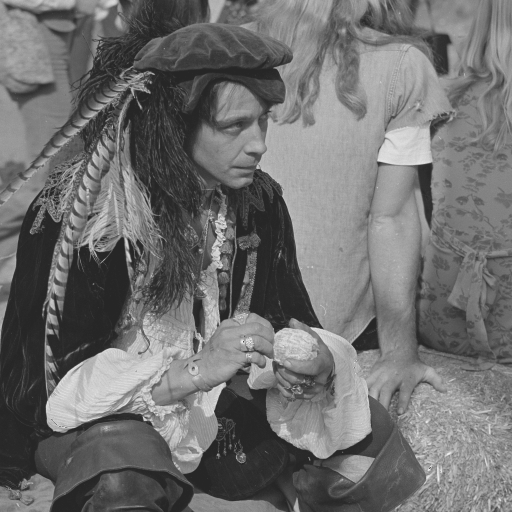} & \includegraphics[width=0.15\linewidth]{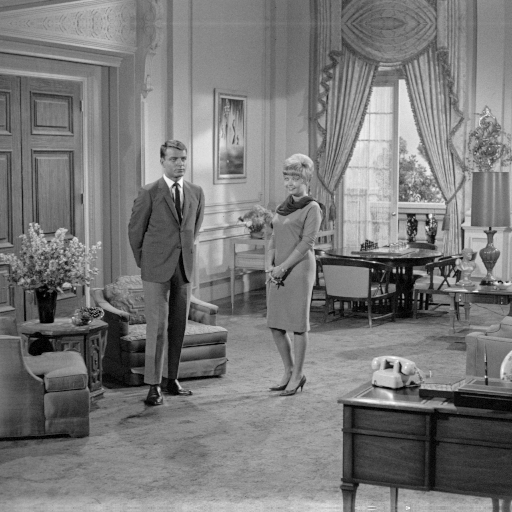}
		\end{tabular}
		\caption{The widely used 12 test images.}
		\label{fig:set12}
	\end{figure}

	\section{3D images\label{sec:experiment.3D} }
	Unlike WARP which is directly applicable to $m$-dimensional data for $m > 2$, other methods compared in~\ref{sec:2D} such as Wedgelet, TI-2D-Haar, and BPFA may require substantial modifications for a new dimensional setting. SHAH is conceptually applicable for 3D data, but the existing software takes hours to days in the tuning step for 3D images of intermediate size while its performance in 2D settings is not among top two. Therefore, we compare WARP with RM and a collection of other approaches, including a 3D image denoising method via local smoothing and nonparametric regression (LSNR) proposed by~\cite{Muk+Qiu:11}, anisotropic diffusion (AD) method~\citep{Per+Mal:90}, total variation minimization (TV) method~\citep{Rud+:92} and  optimized non-local means (ONLM) method~\citep{Cou+:08}. The TV method is modified by~\cite{Muk+Qiu:11} by minimizing a 3D-version of the TV criterion. We adopt simulation settings in~\cite{Muk+Qiu:11}, which uses two artificial 3D images with the following true intensity functions: 
	\begin{align}
	& f_1(x,y,z) =  -(x - \frac{1}{2})^2 -(y - \frac{1}{2})^2 -(z - \frac{1}{2})^2 \\ & \hspace{7em} + \I_{\{(x,y,z) \in R_1\cup R_2\}},
	\end{align} where $R_1 = \{(x,y,z): |x - \frac{1}{2}| \leq \frac{1}{4}, \; |y - \frac{1}{2}| \leq \frac{1}{4}, \; |z - \frac{1}{2}| \leq \frac{1}{4}\}$ and $R_2 = \{(x,y,z): (x - \frac{1}{2})^2 + (y - \frac{1}{2})^2 \leq 0.15^2, \;|z - \frac{1}{2}| \leq 0.35 \}$;
	\begin{align}
	& f_2(x,y,z) = \frac{1}{4}\sin(2\pi(x+y+z)+1) + \frac{1}{4} \\ & \hspace{5em} + \I_{ \{(x,y,z) \in S_1\cup S_2\}},
	\end{align} where $S_1 = \{(x,y,z): (x - \frac{1}{2})^2 + (y - \frac{1}{2})^2 \leq \frac{1}{4}(z - \frac{1}{2})^2 ,\;
	0.2 \leq z \leq 0.5 \}$ and $S_2 = \{(x,y,z): 0.2^2 \leq (x - \frac{1}{2})^2 + (y - \frac{1}{2})^2 + (z - \frac{1}{2})^2 \leq 0.4^2 ,\; z < 0.45 \}$.
	 
	\ref{table:3D} shows the comparison of various methods using MSE. It is worth mentioning that the numerical records for the other five methods to estimate $f_1$ and $f_2$ are from~\cite{Muk+Qiu:11} as the code is not immediately available and the running time for some method such as LSNR can take hours to days (including the tuning step). WARP is uniformly the best approach among all the selected methods at least under the simulation setting. 

	\begin{table*}[b]
		\caption{\label{table:3D}3D denoising for two images $f_1$, $f_2$ in terms of MSE $(\times10^{-2})$.
			WARP uses $5\times5\times5$ local shifts and are based on 100 replications.
			The mean of 100 MSEs is reported, and the maximum standard error is 0.00.
		} 
		\centering
		\begin{tabular}{lcccccccc}
			\toprule
			\multirow{3}{*}{Method} & \multicolumn{4}{c}{$n$ = 64} & \multicolumn{4}{c}{$n$ = 128}\\  \cmidrule(lr){2-5} \cmidrule(lr){6-9}
			& \multicolumn{2}{c}{$f = f_1$}& \multicolumn{2}{c}{$f = f_2$}  & \multicolumn{2}{c}{$f = f_1$}& \multicolumn{2}{c}{$f = f_2$} \\ \cmidrule(lr){2-3} \cmidrule(lr){4-5} \cmidrule(lr){6-7} \cmidrule(lr){8-9}
			& $\sigma = 0.1$ & $ 0.2$ & $\sigma = 0.1$ & $ 0.2$  &$\sigma = 0.1$ & $ 0.2$ & $\sigma = 0.1$ & $ 0.2$ \\ \cmidrule(lr){1-1} \cmidrule(lr){2-3} \cmidrule(lr){4-5} \cmidrule(lr){6-7} \cmidrule(lr){8-9}
			{WARP}&\textbf{0.02}&\textbf{0.04}&\textbf{0.04}&\textbf{0.11}&\textbf{0.01}&\textbf{0.02}&\textbf{0.02}&\textbf{0.05}\\ 
			LSNR    		& 0.03		& 0.08		& 0.06    & 0.13 & \textbf{0.01}		& \textbf{0.03}		& \textbf{0.02}    & 0.06		\\
			TV      		& 0.03		& 0.09		& 0.06    & 0.15 & \textbf{0.01}	& 0.04		& 0.03    & 0.06		\\
			AD      		& 0.06		& 0.35		& 0.07    & 0.38 & 0.03		& 0.20		& 0.04    & 0.22		\\
			ONLM    		& 0.03		& 0.12		& 0.06    & 0.14 & \textbf{0.01}		& 0.06		& 0.03    & 0.06		\\
			RM      		& 0.22		& 0.33		& 0.11    & 0.26 & 0.08		& 0.19		& 0.06    & 0.14		\\
			\bottomrule
		\end{tabular}
	\end{table*}

\end{NoHyper} 

\end{document}